%% file: main.tex
\documentclass[10pt,twocolumn]{article}

\usepackage[letterpaper,textwidth=6.75in,textheight=9.25in,centering]{geometry}
\usepackage{amsmath,amssymb,amsfonts,amsthm,mathtools}
\usepackage{booktabs}
\usepackage{graphicx}
\usepackage{fontawesome5}
\usepackage{microtype}
\usepackage[font=small,labelfont=bf]{caption}
\usepackage[round]{natbib}

\usepackage{tikz}
\usetikzlibrary{arrows.meta,calc,decorations.pathreplacing,positioning,shadows}
\usepackage[colorlinks=true,citecolor=blue,linkcolor=blue,urlcolor=blue,
  hyperfootnotes=false]{hyperref}

\newtheorem{theorem}{Theorem}[section]
\newtheorem{lemma}[theorem]{Lemma}
\newtheorem{proposition}[theorem]{Proposition}
\newtheorem{corollary}[theorem]{Corollary}
\theoremstyle{definition}
\newtheorem{definition}[theorem]{Definition}
\newtheorem{remark}[theorem]{Remark}

\newcommand{\E}{\mathbb E}
\newcommand{\Pp}{\mathbb P}
\newcommand{\R}{\mathbb R}
\newcommand{\1}{\mathbf 1}
\newcommand{\eps}{\epsilon}
\newcommand{\loc}{\mathrm{loc}}

\newcommand{\dist}{\operatorname{dist}}
\newcommand{\Var}{\operatorname{Var}}
\newcommand{\Unif}{\operatorname{Unif}}
\newcommand{\sgn}{\operatorname{sgn}}
\newcommand{\mom}{\mathcal D}

\hypersetup{
  pdftitle={Universal Refinement without Interaction: Order-Optimal 1-Bit Mean Estimation},
  pdfauthor={Yuchen Miao},
  pdfsubject={Fully non-adaptive 1-bit mean estimation under finite moments}
}

\title{Universal Refinement without Interaction:\\
Order-Optimal 1-Bit Mean Estimation}
\author{Yuchen Miao\\
\small Northeastern University, China\\
\small\texttt{miaoyc@mails.neu.edu.cn}\\[0.8ex]
\small\faGithub\enspace\url{https://github.com/PNemo04/onebitmean}}
\date{}

\begin{document}
\maketitle
\vspace{-1.5em}

\begin{abstract}
This paper shows that interaction is unnecessary for order-optimal 1-bit mean
estimation under finite central moments.  For distributions satisfying
$|\E X|\leq\lambda$ and $\E|X-\E X|^k\leq\sigma^k$ for a fixed $k>1$, we
construct a fully non-adaptive public-coin protocol that fixes every measurable
1-bit query before communication.  All localization and refinement queries are
generated in a single batch; a subsequently decoded coarse center changes
only how the stored refinement bits are interpreted.  Two complementary
constructions realize this decoder-side refinement: a finite dyadic scheme
based on periodic residues and a continuous-scale scheme based on shifted
random grids.  Up to $k$-dependent constants, the refinement cost is
$(\sigma/\eps)^2\log(1/\delta)$ for $k>2$,
$(\sigma/\eps)^2[1+\log(\sigma/\eps)]\log(1/\delta)$ for $k=2$, and
$(\sigma/\eps)^{k/(k-1)}\log(1/\delta)$ for $1<k<2$.  Together with the
additive localization cost $1+\log(\lambda/\sigma)$, these rates answer the
Lau--Scarlett open problem for arbitrary measurable 1-bit queries in the
affirmative.  In the parameter range covered by existing small-error,
high-confidence lower bounds, the resulting sample complexity is minimax
optimal.
\end{abstract}

\section{Introduction}

Mean estimation is a basic primitive in statistics, learning, and distributed
optimization.  In a decentralized system, however, the learner may never see
the raw observations: each device observes one fresh sample and communicates
only a compressed message.  This paper considers the extreme 1-bit regime,
in which independent devices observe $X_1,\ldots,X_n$ from an unknown
distribution on $\R$, each device returns a single Boolean answer, and the
learner must estimate the unknown mean from these $n$ bits.

If the learner may choose each query after seeing earlier answers, adaptive
threshold procedures attain the optimal finite-moment rates
\citep{lau2026order}.  Their optimality, however, comes with systems costs
beyond the number of transmitted bits.  Sequential rounds create latency,
require devices to remain available, and prevent all queries from being
deployed in a single batch.  These limitations motivate the open problem
posed by \citet[Open Problem~1]{lau2026open} at COLT 2026:
\par\vspace{0.35\baselineskip}
\noindent\hfill
\begin{minipage}{0.94\columnwidth}
\centering\fontfamily{ppl}\selectfont\itshape
Can fully non-adaptive arbitrary 1-bit queries attain the same minimax rates
as adaptive protocols for nonparametric mean estimation under only a bounded
central moment?
\end{minipage}
\hfill\null\par\vspace{0.35\baselineskip}
The challenge is that every non-adaptive query must remain useful before the
learner knows even a coarse location of the mean.  For threshold or interval
queries, this requirement creates a genuine minimax gap between adaptive and
non-adaptive protocols \citep{lau2026order}.  We show that arbitrary
measurable 1-bit queries close this gap: a fully non-adaptive protocol
attains the same order-optimal sample complexity as adaptive protocols.

\subsection{Problem setup and main question}

Fix a known moment order $k>1$.  For known $\lambda\geq\sigma>0$, define the
nonparametric class
\[
 \mom(k,\lambda,\sigma)
 =\left\{P:
 \begin{array}{l}
 |\E_P X|\leq\lambda,\\[-1mm]
 \E_P|X-\E_PX|^k\leq\sigma^k
 \end{array}
 \right\}.
\]
Apart from these two constraints, $P$ may be discrete, asymmetric, and have
unbounded support.  Write $\ell(x)=1+\log x$ for $x\geq1$, so every
logarithmic factor is at least one.\footnote{The alternatives
$\max\{1,\log x\}$ and $\log(1+x)$ are equivalent up to constant factors.}
Here $\lambda/\sigma$ measures
coarse positional uncertainty, whereas $(k,\sigma)$ govern residual tails
after localization.  Thus $\lambda$ contributes only an additive localization
cost, while $k$ determines the accuracy exponent.  The target 1-bit minimax
rate is
\begin{equation}
\label{eq:colt-rate}
 \mathfrak r_k(\lambda,\sigma,\eps,\delta)
 :=\ell(\lambda/\sigma)+
 \begin{cases}
 \dfrac{\sigma^2}{\eps^2}\log\dfrac{1}{\delta},
 & k>2,\\[2mm]
 \dfrac{\sigma^2}{\eps^2}\ell(\sigma/\eps)
 \log\dfrac{1}{\delta},
 & k=2,\\[2mm]
 (\dfrac{\sigma}{\eps})^{\frac{k}{k-1}}
 \log\dfrac{1}{\delta},
 & 1<k<2.
 \end{cases}
\end{equation}

\begin{definition}[Fully non-adaptive public-coin protocol]
\label{def:nonadaptive}
A public seed\footnote{Equivalently, a public-coin protocol is a distribution
over deterministic non-adaptive query lists; conditioning on the seed fixes
the entire list.} is drawn independently of the data and, before any message
is observed, determines measurable query sets
$A_1,\ldots,A_n\subseteq\R$.  Device $t$ observes only its fresh sample
$X_t$ and transmits $Y_t=\1\{X_t\in A_t\}$.  The decoder may use the full
bit transcript and the public seed, but no query set may depend on any
transmitted bit.  Equivalently, conditional on the public seed and the known
parameters, every $A_t$ is fixed independently of the transcript: a quantity
decoded later, such as a coarse center $c$, may affect only the decoding map
and never the set whose indicator produced $Y_t$.
\end{definition}

The target is an estimator
$\widehat\mu$, computed from the transcript and the public randomness, such
that
\[
 \sup_{P\in\mom(k,\lambda,\sigma)}
 \Pp_P\!\left\{|\widehat\mu-\E_PX|>\eps\right\}\leq\delta
\]
using $O_k(\mathfrak r_k)$ samples.  This is the formal version of the COLT
open problem stated above.

Definition~\ref{def:nonadaptive} is a public-coin specialization of the model
in \citet[Section~2]{lau2026open}, which permits arbitrary measurable query
sets and allows the final decoder to use the completed transcript.  Thus a
decoded center may determine decoder weights, provided that it does not
alter the query sets.  Proposition~\ref{prop:fixed-borel} gives an explicit
Borel representation of the sets used by both constructions.

\subsection{Summary of contributions}

\citet[Theorem~5]{lau2026order} attain
$O_k(\mathfrak r_k)$ samples with adaptive randomized threshold queries.
They also give an order-optimal two-stage general-query protocol
\citep[Section~4.3]{lau2026order}: a fixed coding-theoretic first stage
localizes the mean to an $O(\sigma)$ interval, after which a second refinement
query block is generated using the decoded interval.  Thus one adaptive
transition was sufficient.  Their non-adaptive lower bound rules out
threshold and interval queries, but not arbitrary measurable query sets;
the arbitrary-query case remained unresolved.  Separately, their
Theorem~9 gives a matching $\Omega_k(\mathfrak r_k)$ lower bound for arbitrary
1-bit protocols, even adaptive ones, in its stated small-error,
high-confidence range.

Theorem~\ref{thm:main} shows that the last adaptive transition is unnecessary.
For every fixed $k>1$, a fully non-adaptive public-coin protocol uses
$O_k(\mathfrak r_k)$ samples for every $0<\delta<1/2$.  Both its localization
and refinement queries are fixed before any message is observed; the decoded
center changes only the decoder's weights on the pre-existing refinement
bits.  Hence zero adaptive transitions suffice in the model of arbitrary
measurable queries.  Combined with the preceding lower bound, the protocol is
minimax optimal in the parameter range formally covered by that lower bound.

Beyond the rate itself, the paper makes three technical contributions.
\begin{itemize}
  \item We construct a \emph{universal one-batch refinement}: every query is
  generated before communication, yet the decoder can later specialize the
  same stored bits to whichever coarse interval was localized.
  \item We introduce safe periodic residues and an adjacent-scale telescope.
  The resulting correction at scale $L$ vanishes unless a sample crosses a
  grid boundary, yielding moment-sensitive rather than range-sensitive
  variance.
  \item We give two complementary refinement constructions.  The primary
  construction uses dyadic periodic residues and decoder-side correlated
  thresholds.  A second construction, developed in
  Section~\ref{sec:continuous-overview}, uses continuous random scales,
  shifted grids, Rademacher cell colorings, and an adjacent-cell gradient.
  It reaches the same rates without the dyadic telescope.
\end{itemize}

The main technical obstacle is universal refinement.  The fixed coding block
already solves coarse localization, but location-dependent refinement queries
are easy to design only after an $O(\sigma)$ interval is known.  A fully
non-adaptive protocol must instead serve every possible interval
simultaneously.  Periodic quantizers reuse a bit at many locations, but under
a mere moment assumption a far-tail sample can alias into the local
coordinate system and destroy the variance bound.  We remove this obstacle
by moving all dependence on the coarse location from the quantizer to the
decoder.  Safe periodic phases, decoder-side correlated thresholding, and an
adjacent-scale telescope confine aliasing to a terminal tail term and make
the scale-$L$ variance occur only when $|X-c|\gtrsim L$.

\subsection{Proof ideas}

The proof combines four ingredients.

First, decoder-side correlated thresholding allows a fixed bit to be
recentered after acquisition.  For a bounded function $h\in[a,a+R]$, draw a public
$U\sim\Unif[a,a+R]$ and transmit $B=\1\{U\leq h(X)\}$.  If the decoder later
learns a reference point $c$, it can compute $B_c=\1\{U\leq h(c)\}$ without
another message.  The random variable
\[
 R(B-B_c)
\]
is unbiased for $h(X)-h(c)$ and has conditional second moment
$R|h(X)-h(c)|$.  This is the scalar correlated-sampling idea behind
Wyner--Ziv estimators \citep{mayekar2021wyner}.

Second, safe periodic residues provide an adjacent-scale telescope.  At scale
$L_j=2^jL_0$, we prepare two grids shifted by $L_j/2$.  Whatever
$c$ is, one grid puts it at distance at least $L_j/4$ from every boundary.
Let $r_j(x)$ be the residue of $x$ in the safe grid and define
$f_j=r_{j+1}-r_j$.  Although the safe phases are unknown when the queries are
fixed, there are only four phase pairs for $f_j$; randomizing uniformly over
them and importance weighting at the decoder costs only a constant.  The
identity
\[
 [r_0(X)-r_0(c)]
 +\sum_{j=0}^{J-1}[f_j(X)-f_j(c)]
 =r_J(X)-r_J(c)
\]
concentrates all periodic aliasing at the terminal scale.

Third, the same safe phases yield a crossing-supported variance bound.  If
$|X-c|<L_j/4$, both safe residues change by exactly $X-c$, so
$f_j(X)-f_j(c)=0$.  Hence the scale-$j$ second moment is at most
\[
 C L_j^2\Pp\{|X-c|\geq L_j/4\}.
\]
We sample only one scale per device.  The importance distribution
\[
 p_j\ \propto\ L_j^{(2-k)/2}
\]
is exactly the square-root allocation for this moment envelope.  It
concentrates on small scales when $k>2$, is uniform when $k=2$, and
concentrates on large scales when $1<k<2$.  This single choice produces the
three regimes in \eqref{eq:colt-rate}.

Finally, the second construction replaces the dyadic telescope by a
continuum of random grid widths.  A Rademacher coloring makes the decoder
sensitive only to the two cells adjacent to the cell containing $c$.  At
width $r$, the resulting expectation is a compactly supported function of
$|X-c|/r$.  Integrating this kernel over all $r>0$ reproduces $X-c$ exactly;
finite cutoffs control the small-displacement and tail bias.  Sampling one
scale from a moment-matched density then gives the same three rates.  The
construction is stated in Section~\ref{sec:continuous-overview}, with a
complete proof in Appendix~\ref{app:continuous}.

\subsection{Related work}

\paragraph{Unquantized mean estimation under weak moments.}
The statistical benchmark in \eqref{eq:colt-rate} comes from mean estimation
under minimal moment assumptions.  Median-type procedures and other robust
estimators give sub-Gaussian confidence under finite variance
\citep{devroye2016subgaussian,lee2022optimal,minsker2023efficient}.  Under only
a $(1+\alpha)$-moment, matching possibility and impossibility results
characterize the attainable rate
\citep{devroye2016subgaussian,cherapanamjeri2022optimal}.  The latter work also
gives information-theoretic and computationally efficient estimators without
a finite variance; recent work studies finer distribution-dependent notions
of optimality \citep{dang2023optimality}.  These works observe the samples
directly.  Our question is whether their one-dimensional accuracy exponents
survive when each sample is replaced by one preassigned bit.

\paragraph{Distributed estimation under information constraints.}
Bandwidth-constrained estimation has a long history in decentralized sensing
\citep{luo2005universal,ribeiro2006gaussian,ribeiro2006unknown}.
Information-theoretic lower bounds were subsequently developed through
communication complexity, distributed data processing, geometric
representations, and Fisher information
\citep{NIPS2013_d6ef5f7f,shamir2014fundamental,
braverman2016communication,han2018geometric,barnes2020fisher,
duchi2019locallyprivate,acharya2022interactive,acharya2023unified}.
These frameworks cover high-dimensional parametric or structured families
and several interactive communication models.  Sharp 1-bit mean estimators
often exploit a known distributional shape, including Gaussian location,
symmetric log-concave families, or a location--scale model
\citep{kipnis2022mean,cai2024distributed,kumar2026unknown}.  Such parametric
structure can support efficient threshold or Gray-code quantizers, but it is
absent from $\mom(k,\lambda,\sigma)$.

\paragraph{When interaction helps.}
Interaction creates exponential or polynomial improvements for certain
distributed learning, sparse estimation, and locally private
hypothesis-selection problems
\citep{dagan2020interaction,acharya2022role,gopi2020locally,pour2024sample}.
Conversely, it does not improve the order for every task; for example,
\citet{kazemi2025sample} establish no sequential-interaction gain for
distributed simple binary hypothesis testing.  Thus interactivity is a
property of the statistical task together with its admissible query class,
not of the communication budget alone.  The present result gives another
sharp separation: for scalar finite-moment mean estimation, threshold and
interval restrictions retain an interaction gap, whereas arbitrary
measurable 1-bit queries admit a fully non-adaptive optimum.

\paragraph{Finite-moment 1-bit estimation.}
\citet{lau2026sequential} introduced near-optimal sequential interval-query
estimators under a variance bound and exhibited a two-stage general-query
variant.  \citet{lau2026order} extend the theory to all fixed central moments
$k>1$, establishing the rate $\mathfrak r_k$ in \eqref{eq:colt-rate}.
The fully non-adaptive arbitrary-query case was then isolated by
\citet{lau2026open}.  Our improvement is in interaction complexity, not in
the minimax sample rate or query simplicity: their main protocol uses
adaptive threshold queries, whereas our fully non-adaptive refinement uses
nonlocal periodic measurable sets.

The one-sample statistical query model of \citet[Definition~1.3 and
Theorem~2.1]{feldman2017range} permits one arbitrary Boolean query per fresh
sample.  Its multiscale mean estimators highlight the cost of an unknown
range, while its variance-sensitive guarantee first finds a center
\citep[Lemmas~3.1--3.2 and Theorem~3.8]{feldman2017range}.  Our decoder-side
construction can be viewed as a way to perform that centering after the bits
have already been sent.  Correlated threshold sampling is closely related to
the public-threshold identity in
\citet[Section~4.1]{mayekar2021wyner}.  Non-contiguous periodic
cells also occur in universal scalar quantization
\citep{boufounos2012universal}, but there the objective is geometric signal
recovery rather than distribution-free finite-moment estimation.  We are not
aware of prior work combining decoder-side centering, periodic telescoping,
and moment-matched scale sampling to obtain the rates above.

Concurrent work by \citet{abdalla2026robust} studies robust mean estimation
using dithered 1-bit quantization.  Its distributed construction assumes a
quantization range or rough quantile information covering the unknown
location, while its location-free variant begins with unquantized samples.
Here every sample is instead represented by a single bit from a measurable query
fixed before communication, with no prior bound on the location.

\section{Setup and main theorem}
\label{sec:setup}

Let $X_1,\ldots,X_n\stackrel{\mathrm{iid}}\sim P$.  A randomized 1-bit
protocol chooses measurable query sets $A_1,\ldots,A_n\subseteq\R$ and
observes
\[
 Y_t=\1\{X_t\in A_t\}.
\]
The protocol is \emph{fully non-adaptive} if all query sets, including their
public random seeds, are chosen before any $Y_t$ is observed.  The final
decoder may use the full transcript and all public randomness.  In
particular, reinterpreting a fixed bit after a coarse estimate has been
decoded is allowed; changing its query set is not.

An estimator $\widehat\mu$ is $(\eps,\delta)$-accurate over a class
$\mathcal P$ if
\[
 \sup_{P\in\mathcal P}
 \Pp_P\{|\widehat\mu-\E_PX|>\eps\}\leq\delta.
\]
Here and below, probability is joint over the i.i.d. samples and all public
randomness used to generate the query sets.

\begin{theorem}[Fully non-adaptive order-optimal mean estimation]
\label{thm:main}
Fix $k>1$.  There are constants $c_k,C_k>0$ such that for every
$\lambda\geq\sigma>0$, $0<\eps\leq c_k\sigma$, and
$0<\delta<1/2$, there is a fully non-adaptive public-coin 1-bit protocol
that is $(\eps,\delta)$-accurate over $\mom(k,\lambda,\sigma)$ and uses
\[
 n\leq C_k\ell(\lambda/\sigma)
 +
 \begin{cases}
 C_k\dfrac{\sigma^2}{\eps^2}\log\dfrac{2}{\delta}, & k>2,\\[2mm]
 C_k\dfrac{\sigma^2}{\eps^2}\ell(\sigma/\eps)
       \log\dfrac{2}{\delta}, & k=2,\\[2mm]
 C_k(\dfrac{\sigma}{\eps})^{\frac{k}{k-1}}
       \log\dfrac{2}{\delta}, & 1<k<2.
 \end{cases}
\]
\end{theorem}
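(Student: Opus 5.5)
The proof splits the sample budget into two independent batches, both fixed before communication: a localization batch of size $n_0=O(\ell(\lambda/\sigma)+\log(2/\delta))$ and a refinement batch of size $n_1$ given by the rate in Theorem~\ref{thm:main}. The $\log(2/\delta)$ overhead in $n_0$ is absorbed into the refinement term because $\eps\leq c_k\sigma$. For localization I will use the fixed coding-theoretic first stage of \citet[Section~4.3]{lau2026order}. That stage was already non-adaptive. It outputs a center $c$ with $|c-\mu|\leq C\sigma$ with probability at least $1-\delta/2$. The only property of $c$ used later is that it is a measurable function of the localization transcript and is independent of the refinement samples and their public seeds. This independence holds because the two batches use disjoint devices and disjoint seeds.

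For refinement, I fix the dyadic scales $L_j=2^jL_0$ with $L_0\asymp\eps$ and $j=0,\ldots,J$, where $L_J\asymp\sigma(\sigma/\eps)^{1/(k-1)}$. At each scale I prepare the two grids shifted by $L_j/2$. Each refinement device gets a public scale index $j\sim p_j\propto L_j^{(2-k)/2}$ (or the base index), a uniformly random phase pair from the four possibilities, and a public dither $U$. It transmits $B=\1\{U\leq f_j(X)\}$, where $f_j=r_{j+1}-r_j$ for the chosen phases. This query set is a fixed Borel periodic set, as in Proposition~\ref{prop:fixed-borel}. After decoding $c$, the decoder picks the safe phases at every scale. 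It keeps a device's bit only when that device's phase pair matches the safe pair, and it importance-weights by $4/p_j$. It then forms the correlated-threshold statistic $R_j(B-B_c)$ with $R_j\asymp L_j$. Conditional on $c$, each term is unbiased for $\sum_j[f_j(X)-f_j(c)]$ plus the base term. By the telescope, this sum equals $r_J(X)-r_J(c)$, which equals $X-c$ whenever $|X-c|<L_J/4$. So the bias is at most $\E|X-c|\1\{|X-c|\geq L_J/4\}\lesssim\sigma^k L_J^{1-k}\leq\eps/2$, using $|c-\mu|\lesssim\sigma\ll L_J$. The estimate is then $c$ plus a median-of-means of these per-device statistics, with $O(\log(2/\delta))$ groups.

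The main step is the variance bound. By the crossing-supported estimate, the scale-$j$ second moment is at most $CL_j^2\Pp\{|X-c|\geq L_j/4\}/p_j\lesssim L_j^{2-k}\sigma^k/p_j$ by Markov's inequality. The base scale contributes $O(L_0^2+\sigma^2)$. With the square-root allocation, the total per-device second moment is $V\lesssim(\sum_j L_j^{(2-k)/2}\sigma^{k/2})^2$. For $k>2$ this geometric sum is dominated by small $j$, but the $j$ with $L_j\lesssim\sigma$ must be handled with the trivial bound $L_j^2\leq\sigma^2$, giving $V\lesssim\sigma^2$. For $k=2$ there are $J\asymp\log(\sigma/\eps)$ comparable terms, so $V\lesssim\sigma^2\ell(\sigma/\eps)$; there one should use the sharper bound $\sum_j L_j^2\Pp\{|X-c|\geq L_j/4\}\lesssim\E|X-c|^2$, which removes one log factor. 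For $k<2$ the sum is dominated by $L_J$, giving $V\lesssim\sigma^kL_J^{2-k}=\sigma^2(\sigma/\eps)^{(2-k)/(k-1)}$. In each case Chebyshev's inequality per group plus the median-of-means step gives error at most $\eps/2$ with probability $1-\delta/2$ using $n_1\asymp(V/\eps^2)\log(2/\delta)$ samples, which matches the three regimes. A union bound over the two batches finishes the proof.

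The obstacle I expect is rigor in the variance step. Conditioning on $c$ must be justified through independence. The $k=2$ sum needs the refined layer-cake accounting rather than termwise Markov bounds. And the aliasing at the terminal scale $r_J$ must be checked to affect only the bias and not the variance.
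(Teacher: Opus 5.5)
Your architecture is the paper's: an independent fixed localization block, safe-phase dyadic residues with the adjacent-scale telescope, correlated-threshold bits reinterpreted at the decoder, importance-weighted scale sampling, and median-of-means. However, the variance step fails for $k>2$ because you start the ladder at $L_0\asymp\eps$.

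With $p_j\propto L_j^{(2-k)/2}$ over all $j$ and $k>2$, the normalizer is $\sum_iL_i^{(2-k)/2}\asymp_kL_0^{(2-k)/2}$. A scale with $L_j\asymp\sigma$ therefore gets probability $p_j\asymp_k(\eps/\sigma)^{(k-2)/2}$, and its term in the crossing-supported bound is
\[
 \frac{L_j^2}{p_j}\,\Pp\{|D|\geq L_j/4\}
 \asymp_k\sigma^2\Bigl(\frac{\sigma}{\eps}\Bigr)^{(k-2)/2}\Pp\{|D|\geq L_j/4\}.
\]
The probability can be of order one. For the two-point law $X=\mu\pm\sigma$ with $c=\mu$, it equals $1$ whenever $L_j\leq4\sigma$. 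Moreover, take $L_j\in(2\sigma,4\sigma)$ and let $c$ lie within distance $\sigma$ of a boundary of its safe grid; this is possible because safety only guarantees distance $\geq L_j/4<\sigma$. Then $|f_j(X)-f_j(c)|=L_j$ with probability $1/2$, so the exact second moment $\sum_j(12L_j/p_j)\E|f_j(X)-f_j(c)|$ is of this order as well.

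Your patch of using $L_j^2\leq\sigma^2$ at small scales does not help. The damage comes from the factor $1/p_j$, not from the envelope. The square-root rule is optimal only for the envelope it is matched to, and below scale $\sigma$ the true envelope is $L_j^2$, not $\sigma^kL_j^{2-k}$. As written, the resulting median-of-means sample size is of order $(\sigma/\eps)^{(k+2)/2}\log(1/\delta)$ rather than $(\sigma/\eps)^2\log(1/\delta)$.

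The repair is the paper's choice $L_0=8\tau\asymp\sigma$, and nothing is lost by it. The bias is governed only by the terminal scale $L_J$. The base block absorbs all structure below scale $\sigma$ at second moment $\E W_0^2\leq2L_0^2=O(\sigma^2)$. Every correction scale then satisfies $L_j\geq4\tau$, where the Markov envelope $4^k\tau^kL_j^{2-k}$ is the smaller of the two. So $p_j\propto L_j^{(2-k)/2}$ is exactly its square root, and it gives $\E W^2\lesssim_k\tau^kL_0^{2-k}\bigl(\sum_j2^{j(2-k)/2}\bigr)^2=O_k(\sigma^2)$.

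Alternatively, you could keep $L_0\asymp\eps$ and use $p_j\propto\min\{L_j,\,2^k\tau^{k/2}L_j^{(2-k)/2}\}$. This is the square root of $\min\{L_j^2,4^k\tau^kL_j^{2-k}\}$, and it sums to $O_k(\tau)$. It mirrors the paper's continuous-scale density for $k>2$, which is flat below $\tau$ and decays above.

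Your $k=2$ case (with the pointwise layer-cake sum) and your $1<k<2$ case are fine with either $L_0$. One small slip: on $\{|D|\geq L_J/4\}$ the terminal residue difference need not vanish; you only have $|r_J(X)-r_J(c)|\leq L_J\leq4|D|$. So the bias is at most $5\E|D|\1\{|D|\geq L_J/4\}$, which changes only constants.
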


We use the following existing localization result.

\begin{proposition}[Non-adaptive localization]
\label{prop:localization}
Theorem~16 of \citet[version~2]{lau2026order} gives the following.
There is a universal constant $C_{\loc}$ and a deterministic fully
non-adaptive 1-bit protocol using
\[
 O\!\left(\ell(\lambda/\sigma)+\log\frac1\eta\right)
\]
samples such that, whenever $|\E X|\leq\lambda$ and
$\E|X-\E X|\leq\sigma$, it returns an interval $I$ of length at most
$C_{\loc}\sigma$ containing $\E X$ with probability at least $1-\eta$.
\end{proposition}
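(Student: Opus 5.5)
The plan is to build a deterministic coding-theoretic localizer from a fixed binary code over $\sigma$-scale cells. Its error probability is controlled by Markov's inequality together with Hoeffding's inequality, and a union bound over polynomially many cells gives the additive cost $\ell(\lambda/\sigma)+\log(1/\eta)$. Write $\mu=\E X$. Markov's inequality and $\E|X-\mu|\le\sigma$ give $\Pp\{|X-\mu|>K\sigma\}\le 1/K$; I will fix a large absolute constant $K$ (say $K=20$). I cover $[-\lambda-3W,\lambda+3W]$ by cells of width $W=4K\sigma$, and use three partitions shifted by $0,W/3,2W/3$. There are $N=O(\lambda/\sigma+1)$ cells per partition, so $\log N=O(\ell(\lambda/\sigma))$. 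For every $\mu$ with $|\mu|\le\lambda$, some shift places $\mu$ at distance at least $W/6>K\sigma$ from every boundary. In that \emph{good} shift, $\Pp\{X\in\text{cell}(\mu)\}\ge 1-1/K$.

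Each cell $m$ of each partition receives a codeword $c_m\in\{0,1\}^n$ with $n=C(\log N+\log(1/\eta))$. Device $t=(s,i)$, for shift $s$ and coordinate $i$, uses the fixed Borel set $A_t=\bigcup\{\text{cells }m\text{ of shift }s: c_{m,i}=1\}$. Points outside the covered range are not in $A_t$, and this only adds to the error mass. All $3n$ sets are fixed in advance, so the protocol is deterministic and fully non-adaptive. The code must satisfy a balance property: for every $4$-tuple of distinct codewords $c_m,c_{a},c_{b},c_{d}$, the fraction of coordinates on which $c_m$ agrees with each fixed pattern of the others is within $\gamma$ of its uniform value. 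For small $\gamma$, a uniformly random code violates this for a given tuple with probability $e^{-c\gamma^2n}$, and there are only $N^4$ tuples. Since $n\ge C\log N$, a good code exists, and I fix one once and for all. This is the step where the derandomization is done.

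The decoder scores every cell $m$ of every shift by its agreement $S_m=\#\{i:Y_{(s,i)}=c_{m,i}\}$ and accepts $m$ if $S_m\ge 0.75n$. There are two claims. First, in the good shift the true cell $m^\ast$ has $\E S_{m^\ast}\ge(1-1/K)n\ge 0.95n$. Second, any cell $m$ at distance more than $K\sigma$ from $\mu$ has small expected agreement. For such $m$, coordinate $i$ matches with probability at most $1/K+\sum_j q_j\1\{c_{j,i}=c_{m,i}\}$, where the $q_j$ are the probabilities of the at most two cells meeting $[\mu-K\sigma,\mu+K\sigma]$ in that shift. Averaging over $i$ with the balance property gives $\E S_m\le(1/K+\tfrac12+\gamma)n\le 0.6n$. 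The bits are independent across devices, so Hoeffding's inequality gives deviation $0.1n$ with probability $e^{-cn}$ per cell. A union bound over $3N$ cells then gives the failure probability $3Ne^{-cn}\le\eta$ for $C$ large.

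On the success event, at least one cell is accepted, namely the good shift's $m^\ast$. Every accepted cell meets $[\mu-K\sigma,\mu+K\sigma]$. The decoder outputs any accepted cell enlarged by $K\sigma$ on each side. This interval contains $\mu$ and has length at most $W+2K\sigma=C_{\loc}\sigma$.

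I expect the main obstacle to be the ambiguity near cell boundaries. A sample's bit is then drawn from a mixture of two codewords rather than one, which makes a plain nearest-codeword analysis fail. I handle it with the three shifts, which guarantee one clean partition, and with the acceptance threshold plus the tuple-balance code property, which guarantee that no far cell is accepted in any shift. If the balance constants do not close, the fallback is to enlarge $K$: this drives the far-cell score toward $\tfrac12$ and the true score toward $1$.
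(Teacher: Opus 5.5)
Your proposal is correct. The paper gives no proof of this proposition: it imports the statement from Theorem~16 of \citet{lau2026order} and only describes the query block as ``a fixed global error-correcting code'' run on samples disjoint from the refinement blocks. Its one added step is the Lyapunov reduction \eqref{eq:lyapunov-localization}, which verifies the first-moment hypothesis. You instead give a self-contained argument in that coding-theoretic spirit:
\begin{itemize}
\item three shifted grids of width $\Theta(\sigma)$, so that one grid has a clean cell around $\mu$;
\item a fixed code over the $N=O(\lambda/\sigma)$ cells, obtained by the probabilistic method;
\item acceptance by agreement-count thresholding;
\item Hoeffding per cell, followed by a union bound over $3N$ cells.
\end{itemize}
The citation buys brevity. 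Your version buys an explicit $C_{\loc}=6K$, and it shows that Markov's inequality on $\E|X-\mu|$ is the only distributional input. That is exactly why the paper's Lyapunov reduction suffices.

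Two points to tidy. First, with $W=4K\sigma$ the inequality $W/6>K\sigma$ is false, since $W/6=2K\sigma/3$. The claim survives because three grids offset by $W/3$ do better than $W/6$. Let $e$ be the signed offset of $\mu$ from the nearest point of $(W/3)\mathbb Z$, so $|e|\leq W/6$. The distances from $\mu$ to the three boundary grids are then $|e|$ and $W/3\pm e$. The best shift therefore keeps $\mu$ at distance at least $W/3=4K\sigma/3>K\sigma$ from every boundary.

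Second, the $4$-tuple balance property is more than you need. Write $\mathrm{agr}(c_j,c_m)$ for the number of coordinates on which $c_j$ and $c_m$ agree. Since $\E S_m=\sum_i\Pp\{Y_{(s,i)}=c_{m,i}\}$ is linear in the coordinatewise match indicators, your own computation gives $\E S_m\leq n/K+\sum_j q_j\,\mathrm{agr}(c_j,c_m)$ with $\sum_j q_j\leq1$. Hence the pairwise bound $\mathrm{agr}(c_j,c_m)\leq(1/2+\gamma)n$ already suffices. More generally, any code with relative distance $d_0>2/K$ works after moving the threshold into the gap between $(1-1/K)n$ and $(1/K+1-d_0)n$. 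So an explicit asymptotically good code could replace the random one.
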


The first-moment premise is automatically satisfied on the class studied
here.  Indeed, Lyapunov's inequality gives, for every
$P\in\mom(k,\lambda,\sigma)$,
\begin{equation}
\label{eq:lyapunov-localization}
 \E|X-\mu|
 \leq\bigl(\E|X-\mu|^k\bigr)^{1/k}
 \leq\sigma.
\end{equation}

The query block in Proposition~\ref{prop:localization} is a fixed global
error-correcting code.  We run it on samples independent of all refinement
samples.  Let $c$ be the midpoint of the returned interval.  On the
localization-success event,
\begin{equation}
\label{eq:center-distance}
 |c-\mu|\leq C_{\loc}\sigma/2,
 \qquad \mu=\E X.
\end{equation}
All localization and refinement queries, including mutually independent
public coins for the refinement samples, are generated together before either
block is observed.  The center $c$ is a function only of the localization
transcript and never of the refinement samples or their public seeds.

\begin{lemma}[Conditional decoupling of refinement]
\label{lem:conditional-decoupling}
Let $\mathcal F_{\loc}$ be the sigma-field generated by the localization
transcript, and let $c$ be its decoded midpoint.  For any refinement block
used by either construction, write $S_i$ for all public seeds assigned to
device $i$.  Conditional on $\mathcal F_{\loc}$, the pairs $(X_i,S_i)$ retain
their unconditional product law.  Consequently, for any fixed decoder map
$T$, statistics of the form $T(X_i,S_i;c)$ are conditionally i.i.d.\ within
a refinement block.
This applies in particular to the dyadic statistics $W_{0,i}(c)$ and
$W_i(c)$ and to the continuous statistic $Z_{c,i}$ in
\eqref{eq:continuous-main-statistic}.  On the localization-success event,
all moment bounds below hold uniformly for the realized $c$ satisfying
\eqref{eq:center-distance}.
\end{lemma}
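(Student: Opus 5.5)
The argument is a direct independence computation; the work is bookkeeping about which random objects generate which sigma-fields. I would first fix the probability space explicitly. Let $\mathcal L=(X'_1,\ldots,X'_m,S'_1,\ldots,S'_m)$ collect the localization samples together with any public coins of the localization block; by Proposition~\ref{prop:localization} the latter block is deterministic, so $S'$ is trivial. Let $\mathcal R=((X_i,S_i))_{i\leq n'}$ collect the refinement samples and their seeds. By construction, all samples are i.i.d.\ from $P$ and the public seeds are drawn independently of the data and of one another. Hence the family $\{\mathcal L,(X_1,S_1),\ldots,(X_{n'},S_{n'})\}$ is mutually independent, with $(X_i,S_i)\sim P\otimes Q_i$ for the seed law $Q_i$. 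The localization transcript is $Y'_t=\1\{X'_t\in A'_t\}$ with $A'_t$ fixed, so $\mathcal F_{\loc}\subseteq\sigma(\mathcal L)$. The decoded midpoint $c$ is a measurable function of this transcript and is therefore $\mathcal F_{\loc}$-measurable.

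The key step is the standard fact that if $\mathcal R$ is independent of $\sigma(\mathcal L)\supseteq\mathcal F_{\loc}$, then a regular conditional law of $\mathcal R$ given $\mathcal F_{\loc}$ is its unconditional law. Concretely, for product rectangles $\prod_i B_i$ and $F\in\mathcal F_{\loc}$ we have
\[
\Pp\Bigl(F\cap\bigcap_i\{(X_i,S_i)\in B_i\}\Bigr)=\Pp(F)\prod_i (P\otimes Q_i)(B_i).
\]
A $\pi$--$\lambda$ argument then extends this identity to all product-measurable sets, which shows that the conditional law is the product law $\bigotimes_i (P\otimes Q_i)$. Within a refinement block the seeds are identically distributed, $Q_i\equiv Q$, so the pairs are conditionally i.i.d.

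Next I would apply the freezing (substitution) lemma. Because $c$ is $\mathcal F_{\loc}$-measurable and $\mathcal R$ is independent of $\mathcal F_{\loc}$, for every jointly measurable decoder map $T$ and bounded measurable $g$,
\[
\E\bigl[g\bigl(T(X_1,S_1;c),\ldots,T(X_{n'},S_{n'};c)\bigr)\mid\mathcal F_{\loc}\bigr]=\Phi(c),
\]
where $\Phi(c_0)=\E\, g\bigl(T(X_1,S_1;c_0),\ldots\bigr)$ is computed with the fixed value $c_0$. For fixed $c_0$, the variables $T(X_i,S_i;c_0)$ are i.i.d.\ as measurable images of i.i.d.\ pairs. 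Hence, conditionally on $\mathcal F_{\loc}$, the statistics $T(X_i,S_i;c)$ are i.i.d.\ with the law obtained at $c_0=c$. The specific statistics $W_{0,i}(c)$, $W_i(c)$ and $Z_{c,i}$ are measurable in $(x,s,c)$, being built from indicators, residues and weights, so they are covered by this argument.

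Finally, I would restrict to the localization-success event $G\in\mathcal F_{\loc}$ on which \eqref{eq:center-distance} holds. On $G$ the realized $c$ lies in the deterministic set $\{c_0:|c_0-\mu|\leq C_{\loc}\sigma/2\}$. Any moment bound proved for fixed $c_0$ in this set and uniform over it therefore transfers to the conditional moments on $G$ via $\Phi$. The only real obstacle is measurability: joint measurability of $T$ in $(x,s,c)$, and the existence of regular conditional distributions. The latter holds on the Polish spaces involved, and the freezing lemma avoids needing it explicitly.
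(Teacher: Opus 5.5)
Your proposal is correct and follows essentially the same route as the paper: disjoint samples and independently drawn seeds make the refinement pairs $(X_i,S_i)$ independent of $\mathcal F_{\loc}$, so conditioning preserves their product law and freezes $c$, and moment bounds that depend on $c$ only through \eqref{eq:center-distance} hold uniformly on the success event. Your $\pi$--$\lambda$ extension, freezing lemma, and joint-measurability check simply spell out what the paper's short proof asserts in a few lines.
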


The proof is given in Appendix~\ref{app:dyadic-conditioning}.  This
separation is what permits the rest of the analysis to fix an arbitrary
successful center $c$ without turning the protocol into a two-stage one.

For later use, define the deterministic moment-inflation factor and localized
scale
\begin{equation}
\label{eq:tau-definition}
 \Gamma_k=
 \left[2^{k-1}\left(1+(C_{\loc}/2)^k\right)\right]^{1/k},
 \qquad \tau=\Gamma_k\sigma.
\end{equation}

\section{A universal non-adaptive refinement protocol}
\label{sec:protocol}

Throughout this section, condition on an arbitrary center $c$ satisfying
\eqref{eq:center-distance}.  Lemma~\ref{lem:conditional-decoupling} makes
this conditioning legitimate and preserves the product structure within
each refinement block.

\subsection{Periodic coordinates with safe decoder phases}

For $L>0$ and $b\in\{0,1\}$, define the shifted residue
\begin{equation}
\label{eq:residue}
 \rho_{L,b}(x)
 =x-\frac{bL}{2}
 -L\left\lfloor\frac{x-bL/2}{L}\right\rfloor
 \in[0,L).
\end{equation}
Its discontinuities form the grid
$\mathcal G_{L,b}=bL/2+L\mathbb Z$.

\begin{lemma}[Safe phase]
\label{lem:safe-phase}
For every $c\in\R$ and $L>0$, at least one $b\in\{0,1\}$ satisfies
\[
 \dist(c,\mathcal G_{L,b})\geq L/4.
\]
Consequently, if this $b$ is used and $|x-c|<L/4$, then
\[
 \rho_{L,b}(x)-\rho_{L,b}(c)=x-c.
\]
\end{lemma}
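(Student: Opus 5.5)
The proof has two parts: the existence of a safe phase, and the consequence that the residue moves exactly with $x$ near $c$. For the first part, observe that the union $\mathcal G_{L,0}\cup\mathcal G_{L,1}=(L/2)\mathbb Z$ is an evenly spaced lattice with spacing $L/2$. Every $c\in\R$ lies in some closed interval $[mL/2,(m+1)L/2]$ whose endpoints are consecutive points of this lattice. One endpoint belongs to $\mathcal G_{L,0}$ and the other to $\mathcal G_{L,1}$, because the two grids alternate along $(L/2)\mathbb Z$. Since $c$ is within $L/4$ of the nearer endpoint, it is at least $L/4$ from the farther one; call that farther endpoint $g^*$ and let $b$ be the index of its grid, so $g^*\in\mathcal G_{L,b}$. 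The key check is that $g^*$ is actually the nearest point of $\mathcal G_{L,b}$ to $c$. The neighbours of $g^*$ in $\mathcal G_{L,b}$ are $g^*\pm L$, and the one on the same side as $c$ lies beyond the near endpoint. Its distance from $c$ is therefore at least $L/2+L/2-|c-g^*|\geq L/2$, using $|c-g^*|\le L/2$. Hence $\dist(c,\mathcal G_{L,b})=|c-g^*|\geq L/4$. Equivalently, $\dist(c,\mathcal G_{L,0})+\dist(c,\mathcal G_{L,1})=L/2$, so one of the two terms is at least $L/4$.

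For the second part, I fix this safe $b$ and use the fact that $\rho_{L,b}$ is affine with slope one on each open cell between consecutive points of $\mathcal G_{L,b}$. Concretely, on $[g,g+L)$ with $g\in\mathcal G_{L,b}$, formula \eqref{eq:residue} gives $\rho_{L,b}(x)=x-g$, since the floor term is constant on that cell. The open interval $(c-L/4,c+L/4)$ contains no grid point, because $\dist(c,\mathcal G_{L,b})\geq L/4$. It is connected, so it lies inside a single cell $[g,g+L)$. For $|x-c|<L/4$, both $x$ and $c$ lie in that cell, and subtracting gives $\rho_{L,b}(x)-\rho_{L,b}(c)=(x-g)-(c-g)=x-c$.

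The only delicate point is the boundary case in which $c$ is at distance exactly $L/4$ from a grid point $g$. The neighbourhood is open, so $g$ itself is excluded from $(c-L/4,c+L/4)$. The half-open convention $[g,g+L)$ for cells, which matches the floor in \eqref{eq:residue}, then shows the interval still sits in one cell, including the case where $g$ lies just below $c$. The rest is routine.
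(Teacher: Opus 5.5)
Your proof is correct and takes essentially the paper's route. The half-period shift gives $\dist(c,\mathcal G_{L,0})+\dist(c,\mathcal G_{L,1})=L/2$, so some phase is at distance at least $L/4$; the boundary-free neighborhood of $c$ then lies in a single cell, where $\rho_{L,b}$ has slope one. You also make explicit the nearest-point check and the cell formula $\rho_{L,b}(x)=x-g$ on $[g,g+L)$, which the paper's short proof leaves implicit.
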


\begin{proof}
The two boundary grids are shifted by half a period.  A point at distance
less than $L/4$ from one grid is at distance greater than $L/4$ from the
other, with equality only at a tie.  For a safe grid, the interval joining
$c$ and any $x$ with $|x-c|<L/4$ contains no boundary, so the residue has
slope one throughout that interval.
\end{proof}

Figure~\ref{fig:safe-phase} depicts this boundary-free neighborhood for the
selected half-period shift.

\begin{figure}[t]
\centering
\resizebox{\columnwidth}{!}{\input{fig_main_mechanism.tex}}
\caption{Safe phase at one scale.  The boundary grids
$\mathcal G_{L,0}$ and $\mathcal G_{L,1}$ differ by half a period.  At least
one phase has no boundary in the $L/4$-neighborhood of $c$ (gray), so its
residue is linear throughout that neighborhood.}
\label{fig:safe-phase}
\end{figure}

Fix a deterministic tie rule and write $b_L(c)$ for a safe phase.  Set
\begin{equation}
\label{eq:scales}
 \begin{aligned}
 L_0&=8\tau, & L_j&=2^jL_0,\\
 b_j&=b_{L_j}(c), & r_j&=\rho_{L_j,b_j}.
 \end{aligned}
\end{equation}
For every phase pair $(b,b')\in\{0,1\}^2$, define
\begin{equation}
\label{eq:fj-pair}
 f_{j,b,b'}(x)
 =\rho_{L_{j+1},b'}(x)-\rho_{L_j,b}(x)
 \in(-L_j,2L_j).
\end{equation}
After $c$ is decoded, the selected function is
$f_j=f_{j,b_j,b_{j+1}}=r_{j+1}-r_j$.

\subsection{The 1-bit decoder-side identity}

\begin{lemma}[Correlated threshold bit]
\label{lem:correlated-bit}
Let $h:\R\to[a,a+R]$ be measurable, let
$U\sim\Unif[a,a+R]$, and let the encoder transmit
$B_x=\1\{U\leq h(x)\}$.\footnote{The choice of $\leq$ rather than $<$ is
immaterial because $U$ is continuous and independent of $X$; this remains
true when $P$ has atoms.  We use the same convention for all public
thresholds below.}  A decoder knowing $y$ and $U$ can form
\[
 Z=R\big(B_x-\1\{U\leq h(y)\}\big).
\]
For every fixed $x,y$,
\begin{equation}
\label{eq:correlated-identities}
 \E_U Z=h(x)-h(y),
 \qquad
 \E_U Z^2=R|h(x)-h(y)|.
\end{equation}
\end{lemma}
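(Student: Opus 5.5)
The plan is to compute the two moments of $Z$ directly by reducing the difference of indicators to the indicator of an interval. Fix $x,y$ and write $u=h(x)$, $v=h(y)$. Both lie in $[a,a+R]$, and $U$ is uniform on the same interval, so $\Pp_U\{U\leq s\}=(s-a)/R$ for every $s\in[a,a+R]$. This covers the endpoints as well: $\Pp\{U\leq a\}=0$ because $U$ has no atoms, and $\Pp\{U\leq a+R\}=1$.

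For the first moment, linearity of expectation gives
\[
 \E_U Z=R\bigl(\Pp\{U\leq u\}-\Pp\{U\leq v\}\bigr)=R\cdot\frac{(u-a)-(v-a)}{R}=u-v .
\]
No case analysis is needed here.

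For the second moment, I note that $B_x-\1\{U\leq v\}$ takes values in $\{-1,0,1\}$. It is nonzero exactly when $U$ lies between $v$ and $u$, more precisely in the half-open interval $(\min\{u,v\},\max\{u,v\}]$. Its square is therefore the indicator of that interval. Consequently $\E_U Z^2=R^2\cdot|u-v|/R=R|u-v|$.

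Measurability of $h$ plays no role for fixed $x,y$. It matters only later, when $x=X$ is random and Fubini is applied, and that step is not part of this lemma. The only delicate point, a minor one, is the endpoint convention $\leq$ versus $<$. It is harmless because $U$ is continuous and independent of everything else, as the footnote notes, so there is no genuine obstacle.
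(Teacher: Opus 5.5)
Your proposal is correct and follows essentially the same route as the paper: the first identity comes from $\Pp\{U\leq s\}=(s-a)/R$ and linearity, and the second from observing that the indicator difference is nonzero exactly when $U$ lies between $h(x)$ and $h(y)$, which happens with probability $|h(x)-h(y)|/R$. Your precise half-open interval $(\min\{u,v\},\max\{u,v\}]$ and the endpoint check are slightly more explicit than the paper's proof but do not change the argument.
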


\begin{proof}
For any $u,v\in[a,a+R]$,
\[
 \E_U[\1\{U\leq u\}-\1\{U\leq v\}]=(u-v)/R.
\]
The two indicators differ exactly when $U$ lies between $u$ and $v$, an
event of probability $|u-v|/R$.  Multiplying by $R$ and $R^2$ gives the two
claims.
\end{proof}

\subsection{The base-coordinate block}

Each base query independently draws $B\sim\Unif\{0,1\}$ and
$U\sim\Unif[0,L_0]$, then transmits the single bit
\begin{equation}
\label{eq:base-query}
 Y=\1\{U\leq\rho_{L_0,B}(X)\}.
\end{equation}
After localization, the decoder forms
\begin{equation}
\label{eq:base-stat}
 W_0
 =2\1\{B=b_0\}L_0
 \left(Y-\1\{U\leq\rho_{L_0,B}(c)\}\right).
\end{equation}

By Lemma~\ref{lem:correlated-bit} and the random phase,
\begin{equation}
\label{eq:base-moments}
 \E W_0=\E[r_0(X)-r_0(c)],
 \qquad
 \E W_0^2\leq2L_0^2.
\end{equation}

\subsection{The randomized-scale correction block}

Let $J\geq1$ be a terminal scale to be fixed in
Section~\ref{sec:analysis}.  Define for
$j\in\{0,\ldots,J-1\}$
\begin{equation}
\label{eq:scale-probabilities}
 w_j=2^{j(2-k)/2},
 \qquad
 p_j=\frac{w_j}{\sum_{i=0}^{J-1}w_i}.
\end{equation}
Thus $p_j$ decreases geometrically for $k>2$, is uniform for $k=2$, and
increases geometrically for $1<k<2$.

Each correction query independently draws
\[
 \begin{aligned}
 K&\sim(p_0,\ldots,p_{J-1}),\\
 (B,B')&\sim\Unif\{0,1\}^2,\qquad
 U\sim\Unif[-L_K,2L_K],
 \end{aligned}
\]
and transmits a single bit
\begin{equation}
\label{eq:correction-query}
 Y=\1\{U\leq f_{K,B,B'}(X)\}.
\end{equation}
After $c$ is decoded, set
\begin{equation}
\label{eq:correction-stat}
 \begin{aligned}
 W={}&\frac{4}{p_K}\1\{(B,B')=(b_K,b_{K+1})\}\,3L_K\\
 &\times\left(Y-\1\{U\leq f_{K,B,B'}(c)\}\right).
 \end{aligned}
\end{equation}
The phase indicator in \eqref{eq:correction-stat} discards the three pairs
that are unsafe for the decoded center.

\begin{lemma}[Correction mean and second moment]
\label{lem:correction-moments}
Conditionally on $c$,
\begin{align}
 \E W
 &=\sum_{j=0}^{J-1}\E[f_j(X)-f_j(c)],
 \label{eq:corr-mean}\\
 \E W^2
 &\leq36\sum_{j=0}^{J-1}
 \frac{L_j^2}{p_j}
 \Pp\{|X-c|\geq L_j/4\}.
 \label{eq:corr-second}
\end{align}
\end{lemma}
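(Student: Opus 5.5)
The plan is to condition first on the sample $X$ and the scale index $K$, and then average over the phases $(B,B')$ and the threshold $U$. For fixed $j$ and a fixed phase pair $(b,b')$, the function $f_{j,b,b'}$ takes values in $(-L_j,2L_j)$. This is an interval of length $R=3L_j$, and $U\sim\Unif[-L_j,2L_j]$, so Lemma~\ref{lem:correlated-bit} applies with $h=f_{j,b,b'}$, $x=X$ and $y=c$. Since $c$ is $\mathcal F_{\loc}$-measurable and independent of the refinement seeds (Lemma~\ref{lem:conditional-decoupling}), the selected pair $(b_j,b_{j+1})$ is fixed by the time we condition. On the event $\{K=j,(B,B')=(b_j,b_{j+1})\}$ we have $f_{K,B,B'}=f_j$, and there
\[
 \E_U\bigl[3L_j(Y-\1\{U\leq f_j(c)\})\bigr]=f_j(X)-f_j(c),
 \qquad
 \E_U\bigl[(3L_j(Y-\1\{U\leq f_j(c)\}))^2\bigr]=3L_j|f_j(X)-f_j(c)|.
\]

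For the mean, the phase pair is uniform on $\{0,1\}^2$, so the selected pair occurs with probability $1/4$, which cancels the factor $4$. The scale $K=j$ occurs with probability $p_j$, which cancels $1/p_j$. Summing over $j$ and then taking expectation over $X$ gives \eqref{eq:corr-mean}. Integrability is not an issue, because $W$ is bounded for fixed $J$.

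For the second moment, the squared weights contribute $16/p_j^2$. Multiplying by the selection probability $p_j/4$ gives
\[
 \E W^2=\sum_{j}\frac{4}{p_j}\,3L_j\,\E|f_j(X)-f_j(c)|.
\]
The main step is the crossing-supported bound. Apply Lemma~\ref{lem:safe-phase} at both scales $L_j$ and $L_{j+1}$ with their safe phases $b_j,b_{j+1}$. If $|X-c|<L_j/4\le L_{j+1}/4$, then $r_j(X)-r_j(c)=X-c=r_{j+1}(X)-r_{j+1}(c)$, and hence $f_j(X)-f_j(c)=0$. Otherwise, both $f_j(X)$ and $f_j(c)$ lie in $(-L_j,2L_j)$, so $|f_j(X)-f_j(c)|<3L_j$. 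Therefore
\[
 \E|f_j(X)-f_j(c)|\le 3L_j\,\Pp\{|X-c|\ge L_j/4\}.
\]
Combining, $\E W^2\le\sum_j\frac{4\cdot 9L_j^2}{p_j}\Pp\{|X-c|\ge L_j/4\}$, which is exactly the constant $36$ in \eqref{eq:corr-second}.

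The only delicate point is the bookkeeping of the selection event. The safe phases must be treated as deterministic given $\mathcal F_{\loc}$, which is legitimate by Lemma~\ref{lem:conditional-decoupling}. The unsafe pairs are zeroed out by the indicator, so they contribute nothing to either moment. Once this is in place, the rest is routine cancellation of the importance weights.
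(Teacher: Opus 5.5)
Your proposal is correct and follows essentially the same argument as the paper. On the selected phase-pair event you apply Lemma~\ref{lem:correlated-bit} with range length $3L_j$, cancel the importance weight against the selection probability $p_j/4$ (which leaves the factor $4/p_j$ in the second moment), and use the safe phases at both $L_j$ and $L_{j+1}$ to get $f_j(X)-f_j(c)=0$ when $|X-c|<L_j/4$, with the crude bound $3L_j$ otherwise, which gives the constant $36$.
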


The first identity is the adjacent-scale telescope in expectation.  The
second is the crossing-supported variance bound: at scale $L_j$, a
nonzero correction requires $X$ to leave the boundary-free neighborhood of
$c$.  The importance-weight and phase calculations are given in
Appendix~\ref{app:dyadic-correction}.

\subsection{Protocol summary}
\label{sec:protocol-summary}

The dyadic protocol has three stages.  Its localization, base, and
correction blocks use disjoint fresh samples.

\begin{enumerate}
\item[\textbf{1.}] \textbf{Public query compilation.}
Given $(k,\lambda,\sigma,\eps,\delta)$, set
$\tau,L_0,\ldots,L_J$ and $(p_j)_{j<J}$ according to
\eqref{eq:tau-definition}, \eqref{eq:scales},
\eqref{eq:terminal-J}, and \eqref{eq:scale-probabilities}, and allocate the
three sample blocks according to Proposition~\ref{prop:localization},
\eqref{eq:base-cost}, and \eqref{eq:corr-cost}.  Before observing any
message, fix all localization queries.  For every base sample $i$,
independently draw $B_i\sim\Unif\{0,1\}$ and
$U_i\sim\Unif[0,L_0]$, then fix
$Q_i^{(0)}(x)=\1\{U_i\leq\rho_{L_0,B_i}(x)\}$.  For every correction sample
$i$, independently draw $K_i\sim(p_0,\ldots,p_{J-1})$,
$(B_i,B_i')\sim\Unif\{0,1\}^2$, and
$U_i\sim\Unif[-L_{K_i},2L_{K_i}]$, then fix
$Q_i^{(1)}(x)=\1\{U_i\leq f_{K_i,B_i,B_i'}(x)\}$.

\item[\textbf{2.}] \textbf{One-shot acquisition.}
Apply every fixed query to its assigned sample and collect the resulting
1-bit messages.  The three blocks may be acquired in parallel.

\item[\textbf{3.}] \textbf{Decoder-only reconstruction.}
Decode the localization interval $I$, let $c$ be its midpoint, and choose
$b_j=b_{L_j}(c)$ for $j=0,\ldots,J$.  Convert the stored refinement bits
into $W_{0,i}$ and $W_i$ using \eqref{eq:base-stat} and
\eqref{eq:correction-stat}, assigning zero weight to unsafe phase pairs.
Let $\widehat m_0$ and $\widehat m_1$ be the corresponding median-of-means
estimates and return
\[
 \widehat\mu=c+\widehat m_0+\widehat m_1.
\]
\end{enumerate}

No query in Steps 1--2 depends on the localization transcript; the decoded
center enters only in Step 3.

\section{Analysis of the dyadic protocol}
\label{sec:analysis}

\subsection{Proof roadmap}

On the event \eqref{eq:center-distance}, for $D=X-c$,
\begin{equation}
\label{eq:centered-moment}
 \E|D|^k
 \leq2^{k-1}\left(\E|X-\mu|^k+|\mu-c|^k\right)
 \leq\Gamma_k^k\sigma^k,
\end{equation}
where $\Gamma_k$ and $\tau$ are defined in
\eqref{eq:tau-definition}.  Thus $\E|D|^k\leq\tau^k$, and
\eqref{eq:scales} sets $L_0=8\tau$ directly.

The proof follows the decoder-only reconstruction in the third stage above.
The base block estimates a periodic coordinate at scale $L_0$.  The
correction block adds its changes over adjacent scales.  These means telescope
to a single terminal residue, whose discrepancy from $D$ is a tail event.
Finally, the scale law in \eqref{eq:scale-probabilities} balances the second
moments of the correction terms.  A complete proof, including the
concentration argument, is given in Appendix~\ref{app:dyadic-proof}.

\subsection{Bias: telescope to a terminal scale}

Define
\begin{equation}
\label{eq:terminal-J}
 \begin{aligned}
 C_k^{\mathrm{tail}}&=5\cdot4^{k-1},\\
 J&=\min\left\{j\geq0:
 C_k^{\mathrm{tail}}\frac{\tau^k}{L_j^{k-1}}
 \leq\frac{\eps}{4}\right\}.
 \end{aligned}
\end{equation}
We take the constant $c_k$ in Theorem~\ref{thm:main} small enough that
\begin{equation}
\label{eq:ck-choice}
 c_k<\min\left\{1,
 \frac{4C_k^{\mathrm{tail}}\Gamma_k}{8^{k-1}}
 \right\}.
\end{equation}
This choice makes $J\geq1$; the verification and the precise use of
minimality are deferred to Appendix~\ref{app:dyadic-bias}.

\begin{lemma}[All aliasing is terminal]
\label{lem:terminal-bias}
The expectations targeted by the base and correction blocks telescope as
\begin{equation}
\label{eq:telescope}
 \begin{aligned}
 &\E[r_0(X)-r_0(c)]
 +\sum_{j=0}^{J-1}\E[f_j(X)-f_j(c)]\\
 &\hspace{22mm}=\E[r_J(X)-r_J(c)].
 \end{aligned}
\end{equation}
Moreover,
\begin{equation}
\label{eq:bias-bound}
 \left|\E[r_J(X)-r_J(c)]-(\mu-c)\right|
 \leq C_k^{\mathrm{tail}}\frac{\tau^k}{L_J^{k-1}}
 \leq\eps/4.
\end{equation}
\end{lemma}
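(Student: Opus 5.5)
The telescope \eqref{eq:telescope} is pure algebra applied pointwise.  With the decoded center $c$ fixed, the phases $b_0,\dots,b_J$ are deterministic, and $f_j=r_{j+1}-r_j$ holds as an identity of functions.  For every $x$ the sum
\[
 [r_0(x)-r_0(c)]+\sum_{j<J}[f_j(x)-f_j(c)]
\]
therefore collapses to $r_J(x)-r_J(c)$.  Each residue lies in $[0,L_j)$, so all terms are bounded.  We can then take expectations term by term, and linearity gives \eqref{eq:telescope}.

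For \eqref{eq:bias-bound}, write $D=X-c$, so that $\E D=\mu-c$.  Lemma~\ref{lem:safe-phase} applied at scale $L_J$ with the safe phase $b_J$ gives $r_J(X)-r_J(c)=D$ whenever $|D|<L_J/4$.  Hence
\[
 \bigl|\E[r_J(X)-r_J(c)]-\E D\bigr|
 \leq \E\bigl[\,|r_J(X)-r_J(c)|\,\1\{|D|\geq L_J/4\}\bigr]
 +\E\bigl[\,|D|\,\1\{|D|\geq L_J/4\}\bigr].
\]
\begin{itemize}
\item \textbf{First term.}  Since $|r_J(X)-r_J(c)|<L_J$, this term is at most $L_J\,\Pp\{|D|\geq L_J/4\}$.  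Markov's inequality with \eqref{eq:centered-moment} bounds it by $L_J\cdot 4^k\tau^k/L_J^k=4^k\tau^k/L_J^{k-1}$.
\item \textbf{Second term.}  On $\{|D|\geq L_J/4\}$ we have $|D|\leq |D|^k (L_J/4)^{1-k}$, so this term is at most $4^{k-1}\tau^k/L_J^{k-1}$.
\end{itemize}
Adding the two gives $(4^k+4^{k-1})\tau^k/L_J^{k-1}=5\cdot4^{k-1}\tau^k/L_J^{k-1}$, which is exactly $C_k^{\mathrm{tail}}\tau^k/L_J^{k-1}$.  The final inequality $\leq\eps/4$ holds by the definition of $J$ in \eqref{eq:terminal-J}.

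It remains to check that $J$ is finite.  Because $k>1$, $L_j^{k-1}\to\infty$, so the minimum in \eqref{eq:terminal-J} is attained.  The fact that $J\geq1$ under \eqref{eq:ck-choice} is not needed for this lemma.

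The only delicate step is the application of the safe-phase lemma.  It requires the phase $b_J=b_{L_J}(c)$ to be the one used at the terminal scale, and the scale list \eqref{eq:scales} includes $j=J$ for exactly this reason.  The argument also needs the strict inequality $|D|<L_J/4$ on the good event, which matches the non-strict tail event $|D|\geq L_J/4$ in the Markov step.  Everything else is routine.
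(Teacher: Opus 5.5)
Your proof is correct and follows the paper's argument. You telescope pointwise from $f_j=r_{j+1}-r_j$ and then use the safe phase at scale $L_J$ to confine the discrepancy to $\{|D|\geq L_J/4\}$; the only cosmetic difference is that the paper bounds $|D|+L_J\leq5|D|$ on that event and applies one truncated-moment bound, whereas you treat the $L_J$ part separately by Markov, and both routes give exactly $C_k^{\mathrm{tail}}=5\cdot4^{k-1}$.
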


The identity \eqref{eq:telescope} is visible directly from
$f_j=r_{j+1}-r_j$.  Its statistical role is more important than its
algebraic simplicity: every intermediate alias cancels, so only a terminal
sample with $|D|\geq L_J/4$ can contribute bias.  Minimality in
\eqref{eq:terminal-J} then gives
\begin{equation}
\label{eq:LJ-size}
 L_J\asymp_k
 \sigma\left(\frac{\sigma}{\eps}\right)^{1/(k-1)},
 \qquad
 J=O_k(\ell(\sigma/\eps)).
\end{equation}
The two-sided inequality behind \eqref{eq:LJ-size}, including all constants,
appears in Appendix~\ref{app:dyadic-bias}.

\subsection{Variance: moment-matched scale allocation}

\begin{lemma}[Moment-matched scale allocation]
\label{lem:variance-regimes}
Under \eqref{eq:centered-moment}, the statistic $W$ in
\eqref{eq:correction-stat} satisfies
\begin{equation}
\label{eq:variance-regimes}
 \E W^2\leq
 \begin{cases}
 C_k\sigma^2, & k>2,\\
 C\sigma^2 J, & k=2,\\
 C_k\sigma^kL_J^{2-k}, & 1<k<2.
 \end{cases}
\end{equation}
\end{lemma}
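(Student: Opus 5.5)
Starting from Lemma~\ref{lem:correction-moments}, the second-moment bound \eqref{eq:corr-second} reduces the claim to estimating
\[
 S:=\sum_{j=0}^{J-1}\frac{L_j^2}{p_j}\Pp\{|D|\geq L_j/4\},\qquad D=X-c.
\]
We bound each tail by Markov's inequality applied to the centered moment \eqref{eq:centered-moment}. This gives $\Pp\{|D|\geq L_j/4\}\leq 4^k\tau^k L_j^{-k}$, and hence
\[
 S\leq 4^k\tau^k\sum_{j<J}\frac{L_j^{2-k}}{p_j}.
\]
Write $L_j^{2-k}=L_0^{2-k}2^{j(2-k)}=L_0^{2-k}w_j^2$ and $p_j=w_j/Z_J$ with $Z_J=\sum_{i<J}w_i$. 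Then
\[
 \sum_{j<J}\frac{L_j^{2-k}}{p_j}=L_0^{2-k}Z_J\sum_{j<J}w_j=L_0^{2-k}Z_J^2 .
\]
This is the Cauchy--Schwarz optimum, which is exactly why the square-root allocation was chosen. So everything reduces to evaluating the geometric sum $Z_J=\sum_{j<J}2^{j(2-k)/2}$ in the three regimes, using $L_0=8\tau$ and $\tau=\Gamma_k\sigma$.

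\textbf{Case $k>2$.} The ratio $2^{(2-k)/2}<1$, so $Z_J\leq(1-2^{(2-k)/2})^{-1}$, which is a $k$-dependent constant. Then $S\lesssim_k\tau^kL_0^{2-k}\asymp_k\sigma^2$, giving $\E W^2\leq C_k\sigma^2$.

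\textbf{Case $k=2$.} Here $w_j=1$, so $Z_J=J$ and $S\leq 16\tau^2J^2$. This is a factor $J$ too large: the Markov-per-scale bound is lossy, because it charges the same mass to every scale. Instead I would not apply Markov scale by scale. Since $p_j=1/J$,
\[
 S=J\sum_{j<J}L_j^2\Pp\{|D|\geq L_j/4\}\leq J\,\E\Bigl[\sum_{j:\,L_j\leq4|D|}L_j^2\Bigr].
\]
The inner sum over dyadic $L_j\leq 4|D|$ is geometric and dominated by its largest term, so it is at most $\tfrac43(4|D|)^2$. This yields $S\lesssim J\,\E D^2\lesssim J\sigma^2$, which is the claimed bound. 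I expect this swap of sum and expectation to be the main point; it is the step where the naive Markov route fails.

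\textbf{Case $1<k<2$.} The ratio exceeds one, so $Z_J\lesssim_k w_{J-1}\asymp 2^{J(2-k)/2}$. Then $L_0^{2-k}Z_J^2\lesssim_k L_0^{2-k}2^{J(2-k)}=L_J^{2-k}$, so $S\lesssim_k\tau^kL_J^{2-k}\asymp_k\sigma^kL_J^{2-k}$, as claimed. (The sum-swap argument would also work in this case, but Markov already suffices.)

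Finally, in each case the factor $36$ and the constants $\Gamma_k$, $8^{2-k}$, $4^k$ are absorbed into $C_k$. In the $k=2$ case the constant is universal, since $\Gamma_2$ is an absolute number.
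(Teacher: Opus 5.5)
Your proposal is correct and follows the paper's proof essentially step for step. For $k\neq 2$ you use Markov at each scale and the square-root allocation, which collapses the bound to $\tau^kL_0^{2-k}S_J^2$; the geometric sum $S_J$ is then bounded separately for $k>2$ and $1<k<2$. At $k=2$ you apply the pointwise dyadic summation $\sum_{L_j\leq4|d|}L_j^2\leq\frac{64}{3}d^2$ before taking expectations, which removes the spurious extra factor of $J$ and yields the same constant $768J\tau^2$ as the paper.
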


For $k\neq2$, the moment condition gives the scale envelope
\[
 L_j^2\Pp\{|D|\geq L_j/4\}
 \lesssim_k\tau^kL_j^{2-k}.
\]
The variance-minimizing importance distribution is proportional to the
square root of this envelope, which is precisely
$p_j\propto L_j^{(2-k)/2}$.  If
$S_J=\sum_{j<J}2^{j(2-k)/2}$, this yields the compact calculation
\[
 \E W^2\lesssim_k\tau^kL_0^{2-k}S_J^2.
\]
The sum $S_J$ is bounded for $k>2$ and is of order
$2^{J(2-k)/2}$ for $1<k<2$, producing the first and third lines of
\eqref{eq:variance-regimes}.  At the boundary $k=2$, the allocation is
uniform and the sharper pointwise bound
\begin{equation}
\label{eq:pointwise-dyadic}
 \sum_{j=0}^{J-1}L_j^2\1\{|d|\geq L_j/4\}
 \leq\frac{64}{3}d^2
\end{equation}
gives the middle line.  This pointwise summation is essential: a separate
Markov bound at every scale would introduce an artificial second logarithm.
Appendix~\ref{app:dyadic-variance} gives the details.

\begin{remark}[$k$-dependence and the transition at two]
\label{rem:k-transition}
The theorem treats $k$ as fixed, and the constants hidden by $O_k(\cdot)$
are not uniform as $k\to2$.  To see the source explicitly, put
$a=|k-2|/2$.  The geometric normalizer in
the preceding display is
\[
 S_J=
 \begin{cases}
 \dfrac{1-2^{-aJ}}{1-2^{-a}}, & k>2,\\[2mm]
 \dfrac{2^{aJ}-1}{2^a-1}, & 1<k<2.
 \end{cases}
\]
Thus the constants obtained from the fixed-$k$ geometric bounds grow at
most quadratically in $1/|k-2|$ near the boundary.  For every fixed $J$,
$S_J\to J$ as $k\to2$; exactly at $k=2$, however, the pointwise summation
\eqref{eq:pointwise-dyadic} improves the resulting $J^2$ bound to $J$.
This is the origin of the critical $\ell(\sigma/\eps)$ factor.  A theorem
uniform over moment orders $k=k(\eps)$ approaching two would require a
separate interpolation analysis and is not claimed here.
\end{remark}

\subsection{From second moments to the theorem}

Run localization with failure probability $\delta/3$, and use independent
median-of-means estimates $\widehat m_0$ and $\widehat m_1$ for the base and
correction blocks, each with target deviation $\eps/4$ and failure
probability $\delta/3$.  The decoder returns
\begin{equation}
\label{eq:final-estimator}
 \widehat\mu=c+\widehat m_0+\widehat m_1.
\end{equation}
By \eqref{eq:base-moments}, the base sample cost is
\begin{equation}
\label{eq:base-cost}
 n_0=O_k\!\left(\frac{\sigma^2}{\eps^2}
 \log\frac{1}{\delta}\right)
\end{equation}
and Lemma~\ref{lem:variance-regimes} gives
\begin{equation}
\label{eq:corr-cost}
 n_1\leq
 \begin{cases}
 O_k(\frac{\sigma^2}{\eps^2}\log\frac1\delta), & k>2,\\[1mm]
 O(\frac{\sigma^2}{\eps^2}J\log\frac1\delta), & k=2,\\[1mm]
 O_k(\frac{\sigma^kL_J^{2-k}}{\eps^2}\log\frac1\delta),
 & 1<k<2.
 \end{cases}
\end{equation}
Substituting \eqref{eq:LJ-size} into the heavy-tail line gives
\[
 \frac{\sigma^kL_J^{2-k}}{\eps^2}
 =O_k\!\left((\sigma/\eps)^{k/(k-1)}\right),
\]
while $J=O(\ell(\sigma/\eps))$ gives the critical rate.  The terminal bias
and the two estimation errors use only $3\eps/4$ of the error budget.
The localization confidence term is absorbed by the base cost when
$\eps\leq c_k\sigma$.  These observations prove the claimed orders; the
detailed conditional median-of-means argument, union bound, integer
rounding, and non-adaptivity check are in
Appendix~\ref{app:dyadic-concentration}.

\begin{corollary}[Resolution of the interaction question]
\label{cor:optimality}
For every fixed $k>1$, there are constants $c'_k,\delta_0>0$ such that, when
$0<\eps<c'_k\sigma$ and $0<\delta<\delta_0$, the minimax sample complexity of
fully non-adaptive arbitrary 1-bit mean estimation over
$\mom(k,\lambda,\sigma)$ is, up to $k$-dependent constants, the rate in
Theorem~\ref{thm:main}.
\end{corollary}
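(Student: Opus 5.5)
The corollary follows by sandwiching the fully non-adaptive minimax sample complexity between two bounds that match up to $k$-dependent constants.

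\textbf{Upper bound.} Theorem~\ref{thm:main} gives it directly. For $\eps\le c_k\sigma$ and $\delta<1/2$, the explicit protocol is fully non-adaptive in the sense of Definition~\ref{def:nonadaptive}, and it is $(\eps,\delta)$-accurate over $\mom(k,\lambda,\sigma)$ with $n\le C_k\,\mathfrak r_k(\lambda,\sigma,\eps,\delta)$ samples. To pass from $\log(2/\delta)$ to $\log(1/\delta)$, note that $\log(2/\delta)\le 2\log(1/\delta)$ once $\delta\le1/2$. Hence $n=O_k(\mathfrak r_k)$.

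\textbf{Lower bound.} I will invoke \citet[Theorem~9]{lau2026order}. It gives an $\Omega_k(\mathfrak r_k)$ lower bound for arbitrary 1-bit protocols, adaptive ones included, whenever $\eps<c''_k\sigma$ and $\delta<\delta_0$ for some constants $c''_k,\delta_0$. The fully non-adaptive public-coin protocols of Definition~\ref{def:nonadaptive} form a subclass of those protocols: a public-coin protocol is a mixture of deterministic non-adaptive query lists, and each such list is trivially an adaptive protocol that ignores past bits. If the cited bound is stated only for deterministic or private-coin protocols, I will use averaging. The failure probability of a mixture is the seed-average of the failure probabilities, so some fixed seed achieves no more than the mixture's worst case over the finitely many hard instances the lower bound uses. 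Alternatively, public randomness is already covered as randomized adaptive querying. Either way the lower bound transfers to the non-adaptive class.

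\textbf{Assembly.} Set $c'_k=\min\{c_k,c''_k\}$ and take $\delta_0\le1/2$. In this range both bounds hold, so the minimax sample complexity satisfies $\Omega_k(\mathfrak r_k)\le n^\ast\le O_k(\mathfrak r_k)$.

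\textbf{Main obstacle.} The real difficulty is checking that the cited lower bound includes the additive localization term $\ell(\lambda/\sigma)$ and the $\ell(\sigma/\eps)$ factor at $k=2$, both in exactly the parameter range claimed. The localization term should be easy to supply separately if needed. Distinguishing about $\lambda/\sigma$ well-separated point masses (or narrow mixtures) with one bit per sample requires $\Omega(\log(\lambda/\sigma))$ bits, by a counting or Fano argument. Since $\max\{a,b\}\asymp a+b$, this combines additively with the refinement lower bound. I would restate the corollary's range as the intersection of the cited theorem's hypotheses with Theorem~\ref{thm:main}'s, rather than claiming anything beyond it.
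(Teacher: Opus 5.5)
Your proposal is correct and matches the paper's proof. Like the paper, you sandwich the fully non-adaptive minimax complexity between Theorem~\ref{thm:main} and the lower bound of \citet[Theorem~9]{lau2026order} for adaptive protocols, which transfers to the non-adaptive public-coin subclass, and you take $c'_k$ as the minimum of the two small-error constants. Your conversion $\log(2/\delta)\le2\log(1/\delta)$ and your requirement $\delta_0\le1/2$ are small details the paper leaves implicit. One caution on your averaging remark: averaging over seeds only gives a fixed seed whose \emph{average} failure probability over the hard instances is at most $\delta$, not its worst-case failure probability. That suffices for Bayes-risk (Le Cam/Fano) arguments, and it is unnecessary anyway, since the cited bound already covers randomized adaptive protocols.
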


\begin{proof}
Let $c_k^{\rm up}$ be the small-error constant in
Theorem~\ref{thm:main}, and let $c_k^{\rm lb}$ and $\delta_0$ be the
constants in the matching lower bound of
\citet[Theorem~9]{lau2026order}.  Take
$c'_k=\min\{c_k^{\rm up},c_k^{\rm lb}\}$.  That lower bound holds for
arbitrary 1-bit estimators, including adaptive protocols, and contains
both the additive $\Omega(\log(\lambda/\sigma))$ localization term and the
three refinement terms in \eqref{eq:colt-rate}.  It therefore applies to
the fully non-adaptive subclass and matches Theorem~\ref{thm:main} throughout
the claimed parameter range.
\end{proof}

\section{A complementary continuous-scale refinement}
\label{sec:continuous-overview}

The dyadic protocol uses finitely many scales and an adjacent-scale
telescope.  The construction below retains the same localization and
concentration primitives, but replaces periodic residues by a random-grid
identity over a continuum of scales.  It attains the same upper bound through
a reconstruction mechanism that is not tied to dyadic grids.  A complete
derivation is given in Appendix~\ref{app:continuous}.

Condition on a successful localization transcript and put
$D=X-c$, so that $\E|D|^k\leq\tau^k$ by
\eqref{eq:centered-moment}.  Fix $a=1/4$.  Before observing any bit, each
refinement device independently draws a public scale $R$ with density $p$, a
shift $U\mid R\sim\Unif[0,R)$, and Rademacher cell colors
$(\xi_m)_{m\in\mathbb Z}$.  With
\[
 q_{r,u}(x)=\left\lfloor\frac{x+u}{r}\right\rfloor,
\]
the device sends
\[
 B=\1\{\xi_{q_{R,U}(X)}=+1\},
 \qquad \widetilde B=2B-1.
\]
Only after the center has been decoded, the decoder sets
\[
\begin{aligned}
 Q&=q_{R,U}(c),&
 V&=\frac{c+U}{R}-Q,\\
 A&=\1\{a\leq V\leq1-a\}.&&
\end{aligned}
\]
and evaluates
\begin{equation}
\label{eq:continuous-main-statistic}
 Z_c=\frac{A}{C_ap(R)}
 (\widetilde B-\xi_Q)(\xi_{Q+1}-\xi_{Q-1}),
 \qquad C_a=\log\frac{15}{7}.
\end{equation}
The encoder does not use $c$.  The gate, control variate, and adjacent-cell
gradient in \eqref{eq:continuous-main-statistic} are decoder-side operations
on an already transmitted bit.

Define the compactly supported kernel
\begin{equation}
\label{eq:continuous-main-kernel}
 \psi_a(t)
 =\operatorname{Leb}\bigl([a,1-a]\cap[1-t,2-t)\bigr),
 \qquad t\geq0.
\end{equation}
Its support is $[a,2-a]$ and
\begin{equation}
\label{eq:continuous-main-normalization}
 \int_0^\infty\frac{\psi_a(t)}{t^2}\,dt=C_a.
\end{equation}
Consequently, for every $d\in\R$,
\begin{equation}
\label{eq:continuous-main-reproduction}
 \frac{\sgn(d)}{C_a}
 \int_0^\infty\psi_a\!\left(\frac{|d|}{r}\right)dr=d.
\end{equation}
Equation~\eqref{eq:continuous-main-reproduction} is the continuous analogue
of the dyadic telescope: both identities reconstruct displacement while
allowing the center to enter only at the decoder.

Figure~\ref{fig:random-grid-kernel} illustrates the decoder-side cancellation
and the compact kernel that underlies this reproduction identity.

\begin{figure}[t]
\centering
\resizebox{\columnwidth}{!}{\input{fig_random_grid_kernel.tex}}
\caption{Continuous-scale refinement.  (a)~With $c$ in the central half of
cell $Q$, color averaging gives responses $-1$, $0$, and $+1$ on the left
adjacent, central, and right adjacent cells, respectively, and zero on all
other cells.  (b)~Shift averaging yields the compact kernel
$\psi_{1/4}(|d|/r)$, whose normalized all-scale integral in
\eqref{eq:continuous-main-reproduction} equals $d=X-c$.}
\label{fig:random-grid-kernel}
\end{figure}

\begin{proposition}[Continuous-scale refinement]
\label{prop:continuous-refinement}
Suppose $0<\eps\leq\tau$ and define
\begin{equation}
\label{eq:continuous-main-cutoffs}
 r_-=\frac{\eps}{14},
 \qquad
 r_+=4\left(\frac{8\tau^k}{\eps}\right)^{1/(k-1)}.
\end{equation}
There is a density $p$ on $[r_-,r_+]$, depending only on
$(k,\tau,\eps)$, for which \eqref{eq:continuous-main-statistic} satisfies
\begin{align}
\label{eq:continuous-main-bias}
 \left|\E Z_c-(\mu-c)\right|&\leq\frac{\eps}{4},\\
\label{eq:continuous-main-second}
 \E Z_c^2&\leq
 \begin{cases}
  C_k\tau^2, &k>2,\\
  C\tau^2\ell(\tau/\eps), &k=2,\\
  C_k\tau^2(\tau/\eps)^{\frac{2-k}{k-1}}, &1<k<2.
 \end{cases}
\end{align}
Moreover, conditional on its public seed, every encoder bit is the indicator
of a fixed Borel subset of $\R$ independent of the localization transcript.
Thus median-of-means applied to independent copies of $Z_c$ gives the
refinement terms in Theorem~\ref{thm:main} with a fully non-adaptive query
block.
\end{proposition}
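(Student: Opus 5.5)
The plan is to compute the mean of $Z_c$ by averaging in stages: first over the colors $\xi$, then over the shift $U$, then over the scale $R$. This reduces the mean to the reproduction identity \eqref{eq:continuous-main-reproduction} with finite cutoffs. The second moment will come from a crude pointwise bound that vanishes unless $X$ leaves the cell of $c$, followed by a square-root (Cauchy--Schwarz-optimal) choice of $p$. Throughout, I condition on a successful $c$, as permitted by Lemma~\ref{lem:conditional-decoupling}, and write $D=X-c$ with $\E|D|^k\leq\tau^k$ from \eqref{eq:centered-moment}.

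\textbf{Step 1: color averaging.} Fix $X$, $R=r$, $U=u$ and set $q=q_{r,u}(X)$, so that $\widetilde B=\xi_q$. Since the $\xi_m$ are independent Rademacher signs, $\E_\xi[(\xi_q-\xi_Q)(\xi_{Q+1}-\xi_{Q-1})]$ equals $+1$ if $q=Q+1$, $-1$ if $q=Q-1$, and $0$ otherwise. In particular it is $0$ when $q=Q$, since then the first factor vanishes identically.

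\textbf{Step 2: shift averaging.} Given $r$, the variable $V$ is uniform on $[0,1)$ because $U\sim\Unif[0,r)$. Take $d=D>0$ and $t=d/r$. Then $q=Q+1$ exactly when $V+t\in[1,2)$, and $q=Q-1$ is impossible. Hence
\[
\E_U\bigl[A(\1\{q=Q+1\}-\1\{q=Q-1\})\bigr]=\psi_a(t),
\]
and the case $d<0$ is symmetric. Integrating against $p(r)\,dr$ cancels the weight $1/p(R)$, giving
\[
\E[Z_c\mid D=d]=\frac{\sgn(d)}{C_a}\int_{r_-}^{r_+}\psi_a\!\left(\frac{|d|}{r}\right)dr.
\]
The reproduction identity follows from the substitution $t=|d|/r$, $dr=|d|\,dt/t^2$, together with \eqref{eq:continuous-main-normalization}.

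\textbf{Step 3: bias from the two truncations.} Since $\E|D|<\infty$, Fubini applies and $\E D=\mu-c$. The bias splits into two pieces.
\begin{itemize}
\item Small scales. We have $0\leq\psi_a\leq1-2a=1/2$, so the missing region $r<r_-$ contributes at most $r_-/(2C_a)=\eps/(28C_a)<\eps/8$.
\item Large scales. $\psi_a(|d|/r)\neq0$ requires $|d|\geq ar=r/4$, so the region $r>r_+$ contributes at most $\E\bigl[|D|\1\{|D|\geq r_+/4\}\bigr]$. By the moment bound this is at most $\tau^k(r_+/4)^{1-k}=\eps/8$, using the choice of $r_+$ in \eqref{eq:continuous-main-cutoffs}.
\end{itemize}
Together these give \eqref{eq:continuous-main-bias}.

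\textbf{Step 4: second moment.} This is the step I expect to need the most care. The product in \eqref{eq:continuous-main-statistic} is bounded by $4$ in absolute value and vanishes whenever $q=Q$. Because of the gate $A$, $q\neq Q$ forces $|D|\geq aR=R/4$. Note that far cells do contribute to the second moment even though they contribute nothing to the mean; this bound handles them anyway. It gives
\[
\E Z_c^2\leq\frac{16}{C_a^2}\int_{r_-}^{r_+}\frac{\Pp\{|D|\geq r/4\}}{p(r)}\,dr .
\]

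For $k\neq2$, use the envelope $\Pp\{|D|\geq r/4\}\leq h(r)^2$ with $h(r)=\min\{1,(4\tau/r)^{k/2}\}$, and take $p=h/\int_{r_-}^{r_+}h$. The bound then becomes $(\int_{r_-}^{r_+}h)^2$.
\begin{itemize}
\item For $k>2$ the integral is $O_k(\tau)$.
\item For $1<k<2$ it is $O_k(\tau^{k/2}r_+^{1-k/2})$. Squaring and substituting $r_+\asymp_k\tau(\tau/\eps)^{1/(k-1)}$ gives $C_k\tau^2(\tau/\eps)^{(2-k)/(k-1)}$.
\end{itemize}

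For $k=2$, a separate Markov bound at each scale would lose a logarithm, so I instead take $p(r)=1/(r\Lambda)$ with $\Lambda=\log(r_+/r_-)$. The bound becomes $\Lambda\int_0^\infty r\Pp\{|D|\geq r/4\}\,dr=16\Lambda\,\E D^2$. Here $\Lambda=O(\ell(\tau/\eps))$ because $\eps\leq\tau$.

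\textbf{Step 5: Borel structure and concentration.} Conditional on the seed $(R,U,\xi)$, the bit is the indicator of $\bigcup_{m:\xi_m=+1}[mR-U,(m+1)R-U)$. This is a countable union of intervals, hence Borel, and it involves no $c$. Median-of-means over conditionally i.i.d.\ copies of $Z_c$ (Lemma~\ref{lem:conditional-decoupling}) then gives deviation at most $\eps/4$ with the sample sizes stated in Theorem~\ref{thm:main}, since $\tau\asymp_k\sigma$.
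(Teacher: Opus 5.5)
Your proof is correct. Its skeleton matches the paper's: color averaging, shift averaging to the kernel $\psi_a$, cancellation of $p(R)$, the gated crossing bound $Z_c^2\le 16C_a^{-2}p(R)^{-2}\1\{R\le|D|/a\}$, the upper-tail Markov bound at $r_+/4$, and the explicit Borel query set.

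Two steps are done differently.

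\textbf{Small-scale bias.} You bound the discarded integral uniformly in $d$ by $r_-\sup\psi_a/C_a=\eps/(28C_a)$. The paper instead uses the exact-reproduction window $(2-a)r_-\le|d|\le ar_+$, together with $|\eta_{r_-,r_+}(d)|\le|d|$ off the window. Your bound is slightly sharper and does not need the window at the lower end.

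\textbf{Variance for $k\neq2$.} You average over $D$ first, apply Markov at every scale, and take the square-root density $p\propto\min\{1,(4\tau/r)^{k/2}\}$, so Cauchy--Schwarz gives $(\int h)^2$. This is the continuous analogue of the dyadic allocation $p_j\propto L_j^{(2-k)/2}$. As in the dyadic proof, it forces a separate pointwise (layer-cake) argument at $k=2$. The paper instead takes $p\propto r^{1-k}$, flattened to $\tau^{-1}$ below $\tau$ when $k>2$. With that choice the pointwise integral $\int_{r_-}^{|d|/a}dr/p(r)$ is already $\lesssim_k|d|^k$ (plus a $\tau|d|$ term when $k>2$). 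Only $\E|D|^k\le\tau^k$ is used afterwards, and no per-scale Markov step appears.

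\textbf{What each route buys.} The paper's intermediate bound is proportional to the normalizer $\int_{r_-}^{r_+}r^{1-k}\,dr$, which tends to $\log(r_+/r_-)$ as $k\to2$. So that density family passes continuously through the log-uniform case, with a single logarithm. Your $(\int h)^2$ tends to a squared logarithm there, which is the same quadratic degeneration described in Remark~\ref{rem:k-transition} for the dyadic scheme. For fixed $k$, both routes give the stated orders.

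\textbf{Minor slip.} $\int_0^\infty r\,\Pp\{|D|\ge r/4\}\,dr=8\E D^2$, not $16\E D^2$. Your $k=2$ bound is therefore off only by a factor of two in the safe direction.
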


\begin{proof}[Proof sketch]
For integers $m,j$, pairwise independence and centering of the colors give
\[
\begin{aligned}
 &\E_\xi[(\xi_m-\xi_j)(\xi_{j+1}-\xi_{j-1})]\\
 &\qquad=\1\{m=j+1\}-\1\{m=j-1\}.
\end{aligned}
\]
Conditionally on $R=r$ and $D=d$, averaging the shift makes the event that
$X$ lies in the appropriate adjacent cell have probability
$\psi_a(|d|/r)$.  The factor $1/p(R)$ then cancels the scale density, giving
\begin{equation}
\label{eq:continuous-main-truncated}
 \E[Z_c\mid D=d]
 =\frac{\sgn(d)}{C_a}
 \int_{r_-}^{r_+}\psi_a\!\left(\frac{|d|}{r}\right)dr.
\end{equation}
The kernel contributes only when
$|d|/(2-a)<r<|d|/a$.  Hence the truncated integral equals $d$ for
\[
 \frac{\eps}{8}\leq|d|\leq
 \left(\frac{8\tau^k}{\eps}\right)^{1/(k-1)}.
\]
Outside this window, truncation preserves the sign and cannot increase the
magnitude.  The small-displacement contribution is at most $\eps/8$, and
the moment bound controls the upper tail by another $\eps/8$, proving
\eqref{eq:continuous-main-bias}.

The same-cell control variate in
\eqref{eq:continuous-main-statistic} vanishes unless $X$ crosses a grid
boundary.  Since the gate places $c$ at distance at least $aR$ from either
boundary,
\begin{equation}
\label{eq:continuous-main-envelope}
 \E[Z_c^2\mid D=d]
 \leq\frac{16}{C_a^2}
 \int_{r_-}^{r_+}
 \1\{r\leq|d|/a\}\frac{dr}{p(r)}.
\end{equation}
Choose $p(r)\propto r^{1-k}$ for $1<k<2$,
$p(r)\propto r^{-1}$ for $k=2$, and, for $k>2$,
\[
 p(r)\propto
 \begin{cases}
  \tau^{-1}, &r\leq\tau,\\
  \tau^{k-2}r^{1-k}, &r>\tau.
 \end{cases}
\]
Substitution in \eqref{eq:continuous-main-envelope}, followed by
$\E|D|^k\leq\tau^k$, yields \eqref{eq:continuous-main-second}.  Dividing
these bounds by $\eps^2$ and applying Lemma~\ref{lem:mom} gives respectively
$(\tau/\eps)^2$, $(\tau/\eps)^2\ell(\tau/\eps)$, and
$(\tau/\eps)^{k/(k-1)}$ samples, up to the common
$\log(1/\delta)$ factor.  The cutoff and moment calculations are completed
in Appendix~\ref{app:continuous}.
\end{proof}

\begin{remark}[Comparison of the two constructions]
Both constructions attain the same statistical rate through different
reconstruction identities.  The dyadic route uses a finite scale
distribution and a direct telescope, whereas the continuous route uses one
refinement block and an all-scale reproduction identity.  Thus non-adaptive
refinement is not specific to the dyadic representation.
\end{remark}

\section{Conclusion}

We resolved the interaction question for 1-bit mean estimation with
arbitrary measurable queries: a fully non-adaptive protocol attains the same
order-optimal finite-moment rates as the best adaptive procedures.  Both
constructions combine a fixed localization block with universal refinement
queries.  The dyadic construction uses safe periodic residues and an
adjacent-scale telescope, whereas the continuous construction uses shifted
random grids, cell colorings, and a compactly supported integral identity.

The common principle is to separate query adaptivity from decoder
adaptivity.  Every refinement bit is acquired before the coarse location is
known; the decoded center changes only how the stored bits are interpreted.
This separation permits a single query plan to serve every candidate
location, while boundary-crossing identities provide the moment-sensitive
variance needed for the three optimal tail regimes.  The two constructions
show that this principle is not tied to one algebraic representation of the
refinement.

Whether such universal refinements extend to higher dimensions, privacy
constraints, or restricted query geometries remains open.

\bibliographystyle{plainnat}
\bibliography{references}

\clearpage
\appendix
\onecolumn
\thispagestyle{plain}

\begin{center}
{\LARGE\bfseries Appendix\par}
\vspace{0.5em}
{\large Universal Refinement without Interaction:\\
Order-Optimal 1-Bit Mean Estimation\par}
\end{center}
\vspace{0.5em}
\hrule
\vspace{1.25em}

\section*{Overview}
The appendices provide complete proofs and supporting numerical validation.
Appendix~\ref{app:dyadic-proof} establishes Theorem~\ref{thm:main};
Appendix~\ref{app:continuous} establishes
Proposition~\ref{prop:continuous-refinement};
Appendix~\ref{app:implementation} reports the numerical experiments; and
Appendix~\ref{app:query-sets} verifies that the randomized quantizers are
Borel measurable and fixed before communication.

\section{Proof of Theorem~\ref{thm:main}: the dyadic protocol}
\label{app:dyadic-proof}

Conditioning on successful localization fixes the decoded center while
preserving independence across the refinement blocks.  The proof then
establishes the base and correction identities, controls the terminal bias,
derives the three variance bounds, and concludes with concentration.

\subsection{Step 1: localization and conditional independence}
\label{app:dyadic-conditioning}

\begin{proof}[Proof of Lemma~\ref{lem:conditional-decoupling}]
The localization block and every refinement block use disjoint i.i.d.\
samples.  Every refinement seed is generated independently of all samples,
of the localization seeds, and of every other refinement seed.  Therefore
the collection $(X_i,S_i)_i$ is independent of
$\mathcal F_{\loc}$.  Conditioning on $\mathcal F_{\loc}$ leaves its
product law unchanged and makes the decoded midpoint $c$ deterministic.
The moment bounds use $c$ only through
\eqref{eq:center-distance}; hence they hold uniformly over every successful
localization transcript.
\end{proof}

On a successful localization transcript,
$D=X-c=(X-\mu)+(\mu-c)$ and the convexity inequality
$|u+v|^k\leq2^{k-1}(|u|^k+|v|^k)$ give
\[
 \E|D|^k
 \leq2^{k-1}\bigl(\E|X-\mu|^k+|\mu-c|^k\bigr)
 \leq2^{k-1}\left(1+(C_{\loc}/2)^k\right)\sigma^k
 =\tau^k.
\]
This proves \eqref{eq:centered-moment}.  All expectations below are
conditional on such a fixed successful transcript unless stated otherwise.

\subsection{Step 2: the base coordinate}
\label{app:dyadic-base}

We verify \eqref{eq:base-moments}.  Conditional on $B=b_0$, apply
Lemma~\ref{lem:correlated-bit} with range length $L_0$ and reference $c$.
Because $\Pp\{B=b_0\}=1/2$, the decoder weight in
\eqref{eq:base-stat} gives
\[
 \E W_0
 =2\cdot\frac12\,
 \E[r_0(X)-r_0(c)]
 =\E[r_0(X)-r_0(c)].
\]
For the second moment, the same identity yields
\[
 \E W_0^2
 =4\cdot\frac12\,L_0
   \E|r_0(X)-r_0(c)|
 \leq2L_0^2,
\]
because both residues lie in $[0,L_0)$.  Lemma~\ref{lem:conditional-decoupling}
justifies this calculation conditional on the decoded center.

\subsection{Step 3: adjacent-scale corrections}
\label{app:dyadic-correction}

\begin{proof}[Proof of Lemma~\ref{lem:correction-moments}]
Condition on $K=j$.  If the random phase pair is the selected pair, apply
Lemma~\ref{lem:correlated-bit} to
$f_j\in(-L_j,2L_j)$, an interval of length $3L_j$.  Averaging the factor
$4/p_j$ over the scale probability $p_j$ and phase-pair probability $1/4$
gives
\[
 \E W
 =\sum_{j=0}^{J-1}\E[f_j(X)-f_j(c)],
\]
which is \eqref{eq:corr-mean}.

For the second moment, Lemma~\ref{lem:correlated-bit} gives
\[
 \begin{aligned}
 &\E_U\!\left[
 \big\{3L_j(Y-\1\{U\leq f_j(c)\})\big\}^2
 \middle|X,c,K=j,(B,B')=(b_j,b_{j+1})\right]\\
 &\hspace{35mm}=3L_j|f_j(X)-f_j(c)|.
 \end{aligned}
\]
If $|X-c|<L_j/4$, Lemma~\ref{lem:safe-phase} applies at both
$L_j$ and $L_{j+1}$.  Each selected residue then changes by $X-c$, so
$f_j(X)-f_j(c)=0$.  In every case,
$|f_j(X)-f_j(c)|\leq3L_j$, and the preceding display is bounded by
\[
 9L_j^2\1\{|X-c|\geq L_j/4\}.
\]
Squaring the importance weight and averaging over the scale and phase pair
contributes
$p_j(1/4)(4/p_j)^2=4/p_j$.  Summing over $j$ proves
\eqref{eq:corr-second}.
\end{proof}

\subsection{Step 4: telescoping and terminal bias}
\label{app:dyadic-bias}

\begin{proof}[Proof of Lemma~\ref{lem:terminal-bias}]
Since $f_j=r_{j+1}-r_j$, the pointwise sum telescopes:
\[
 [r_0(X)-r_0(c)]
 +\sum_{j=0}^{J-1}[f_j(X)-f_j(c)]
 =r_J(X)-r_J(c).
\]
Taking expectations proves \eqref{eq:telescope}.

Safety at scale $L_J$ implies
$r_J(X)-r_J(c)=D$ whenever $|D|<L_J/4$.  On the complementary event,
both residues belong to $[0,L_J)$, and hence
\[
 |D-(r_J(X)-r_J(c))|
 \leq |D|+L_J
 \leq5|D|,
\]
where the last inequality uses $|D|\geq L_J/4$.  Therefore
\begin{align*}
 \left|\E[D-(r_J(X)-r_J(c))]\right|
 &\leq5\E\!\left[|D|\1\{|D|\geq L_J/4\}\right]\\
 &\leq5\frac{\E|D|^k}{(L_J/4)^{k-1}}
 \leq C_k^{\mathrm{tail}}\frac{\tau^k}{L_J^{k-1}}.
\end{align*}
The definition \eqref{eq:terminal-J} bounds the last expression by
$\eps/4$, proving \eqref{eq:bias-bound}.
\end{proof}

The terminal-scale bounds follow from the criterion in
\eqref{eq:terminal-J} and its minimality.  At $j=0$, the left side equals
\[
 C_k^{\mathrm{tail}}\frac{\tau^k}{L_0^{k-1}}
 =\frac{C_k^{\mathrm{tail}}\tau}{8^{k-1}}.
\]
Since $\eps\leq c_k\sigma=c_k\tau/\Gamma_k$, the strict choice
\eqref{eq:ck-choice} makes this quantity larger than $\eps/4$; thus
$J\geq1$.  The criterion holds at $J$ and, by minimality, fails at $J-1$.
Consequently,
\begin{equation}
\label{eq:LJ-two-sided}
 4C_k^{\mathrm{tail}}\frac{\tau^k}{\eps}
 \leq L_J^{k-1}
 <2^{k-1}\,4C_k^{\mathrm{tail}}\frac{\tau^k}{\eps}.
\end{equation}
Indeed, the left inequality is the criterion at $J$, while the failed
criterion at $J-1$ gives the right inequality after using
$L_J=2L_{J-1}$.  Taking $(k-1)$-st roots and substituting
$\tau=\Gamma_k\sigma$ proves \eqref{eq:LJ-size}.  Finally,
$J=\log_2(L_J/L_0)$ with $L_0=8\tau$ gives
$J=O_k(\ell(\sigma/\eps))$.

\subsection{Step 5: the three variance regimes}
\label{app:dyadic-variance}

\begin{proof}[Proof of Lemma~\ref{lem:variance-regimes}]
First suppose $k\neq2$.  Markov's inequality and
\eqref{eq:centered-moment} give, at every scale,
\begin{equation}
\label{eq:scale-envelope}
 L_j^2\Pp\{|D|\geq L_j/4\}
 \leq4^k\tau^kL_j^{2-k}.
\end{equation}
Let $S_J=\sum_{i=0}^{J-1}2^{i(2-k)/2}$.  Substituting
$p_j=2^{j(2-k)/2}/S_J$ into \eqref{eq:corr-second} and using
$L_j=2^jL_0$ yields
\begin{align}
 \E W^2
 &\leq C_k\tau^kL_0^{2-k}
 S_J\sum_{j=0}^{J-1}2^{j(2-k)/2}
 \notag\\
 &=C_k\tau^kL_0^{2-k}S_J^2.
 \label{eq:variance-geometric}
\end{align}
If $k>2$, then $S_J\leq(1-2^{(2-k)/2})^{-1}=O_k(1)$.
Since $L_0=8\tau$, \eqref{eq:variance-geometric} is
$O_k(\tau^2)=O_k(\sigma^2)$.  If $1<k<2$, then
\[
 S_J
 =\frac{2^{J(2-k)/2}-1}{2^{(2-k)/2}-1}
 =O_k(2^{J(2-k)/2}),
\]
so \eqref{eq:variance-geometric} is
$O_k(\tau^kL_J^{2-k})=O_k(\sigma^kL_J^{2-k})$.

For $k=2$, $p_j=1/J$.  Fix $d\in\R$, and let $j_\star$ be the largest
included index in the left side of \eqref{eq:pointwise-dyadic}, if one
exists.  Then $L_{j_\star}\leq4|d|$ and
\[
 \sum_{j=0}^{j_\star}L_j^2
 =L_{j_\star}^2\sum_{h=0}^{j_\star}4^{-h}
 \leq\frac43L_{j_\star}^2
 \leq\frac{64}{3}d^2.
\]
If there is no included index, the sum is zero.  This proves
\eqref{eq:pointwise-dyadic}.  Taking expectations and using
$\E D^2\leq\tau^2$, then applying \eqref{eq:corr-second}, gives
\[
 \E W^2
 \leq36J\sum_{j=0}^{J-1}
 L_j^2\Pp\{|D|\geq L_j/4\}
 \leq768J\tau^2.
\]
This proves all three cases.
\end{proof}

\subsection{Step 6: concentration and completion of the proof}
\label{app:dyadic-concentration}
\label{app:mom}

\begin{lemma}[Median-of-means concentration]
\label{lem:mom}
Let $Z_1,\ldots,Z_n$ be independent copies of a random variable with mean
$m$ and $\E Z_1^2\leq V$.  There is a universal constant $C$ such that, for
every $t>0$ and $0<\eta<1/2$, if
\[
 n\geq C\max\left\{1,\frac{V}{t^2}\right\}\log\frac1\eta,
\]
then the following estimator differs from $m$ by at most $t$ with
probability at least $1-\eta$: split the observations into
$q=\lceil8\log(1/\eta)\rceil$ equal-sized blocks, discard at most $q-1$
leftover observations, and take the median of the $q$ block means.
\end{lemma}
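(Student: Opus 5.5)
The plan is the standard median-of-means argument: Chebyshev inside each block, then a binomial (Hoeffding) bound on the number of bad blocks.

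\textbf{Step 1 (one block).} Set $q=\lceil 8\log(1/\eta)\rceil$ and let each block have size $s=\lfloor n/q\rfloor$. Since $\eta<1/2$, $\log(1/\eta)>\log 2$, so $q\le 8\log(1/\eta)+1\le C'\log(1/\eta)$ for a universal $C'$. Taking $C$ large therefore guarantees $s\ge 1$ and $s\ge 4V/t^2$ under the hypothesis on $n$.

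Each block mean $\bar Z_b$ is an average of $s$ independent copies. Its variance is at most $\Var(Z_1)/s\le \E Z_1^2/s\le V/s$, and only the second-moment bound is used here. Chebyshev then gives
\[
 \Pp\{|\bar Z_b-m|>t\}\le \frac{V}{st^2}\le\frac14 .
\]

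\textbf{Step 2 (counting bad blocks).} The blocks are disjoint, so the indicators $\xi_b=\1\{|\bar Z_b-m|>t\}$ are independent Bernoulli variables with means at most $1/4$. The median lies outside $[m-t,m+t]$ only if at least $q/2$ blocks are bad. Hoeffding's inequality, with deviation $1/4$ above the mean, bounds this probability by
\[
 \exp(-2q(1/4)^2)=\exp(-q/8)\le \eta,
\]
by the choice of $q$. Discarding the leftover observations does not affect independence. In fact, $\E Z_1^2\le V$ forces $|m|^2\le V$, so the $\max\{1,\cdot\}$ term serves only to ensure at least one sample per block.

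\textbf{Expected obstacle.} There is no real obstacle. The only care needed is constant bookkeeping: integer rounding of $q$ and $s$, where $\lceil\cdot\rceil$ must absorb the $+1$ using $\log(1/\eta)>\log2$, and confirming that the Hoeffding exponent matches the constant $8$ in $q$.

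When this lemma is applied later, the i.i.d. hypothesis holds conditionally on $\mathcal F_{\loc}$ by Lemma~\ref{lem:conditional-decoupling}, so the lemma can be invoked for each fixed successful center $c$.
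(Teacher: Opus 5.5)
Your proof is correct and follows the paper's route: Chebyshev within each block, then independence of the disjoint blocks to bound the probability that at least half of the $q$ block means fail. The only difference is in the constants and the binomial-tail tool. The paper forces $s\geq 16\max\{1,V/t^2\}$, so each block fails with probability at most $1/16$, and then uses the crude bound $\sum_{h\geq q/2}\binom{q}{h}16^{-h}\leq 2^{-q}\leq\eta$. You instead take per-block failure probability $1/4$ and apply Hoeffding to get $e^{-q/8}\leq\eta$, which matches the constant $8$ in $q$ exactly.
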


\begin{proof}
Split the observations into $q=\lceil8\log(1/\eta)\rceil$ blocks of size
$s=\lfloor n/q\rfloor$.  For a sufficiently large universal $C$, the
sample-size assumption ensures
$s\geq16\max\{1,V/t^2\}$.  For a block mean $\overline Z$, Chebyshev's
inequality and $\Var(Z_1)\leq\E Z_1^2\leq V$ give
\[
 \Pp\{|\overline Z-m|>t\}
 \leq \frac{\Var(Z_1)}{st^2}
 \leq\frac{V}{st^2}
 \leq\frac1{16}.
\]
The block-failure indicators are independent.  The median fails only if at
least half of the blocks fail, whose probability is at most
\[
 \sum_{h=\lceil q/2\rceil}^{q}\binom qh(1/16)^h
 \leq2^q(1/16)^{q/2}=e^{-q\log2}\leq\eta.
\]
Rounding the block sizes and discarding leftovers change only the universal
sample-size constant.
\end{proof}

\begin{proof}[Proof of Theorem~\ref{thm:main}]
Run Proposition~\ref{prop:localization} with failure probability
$\delta/3$.  Use an independent base block whose median-of-means estimate
$\widehat m_0$ targets $\E W_0$ to accuracy $\eps/4$ with failure
probability $\delta/3$, and an independent correction block whose estimate
$\widehat m_1$ targets $\E W$ to the same accuracy and failure probability.
Return \eqref{eq:final-estimator}.

Condition on $\mathcal F_{\loc}$.  On every successful localization
transcript, Lemma~\ref{lem:conditional-decoupling} fixes $c$ while preserving
the product laws in both refinement blocks.  The second-moment bounds
\eqref{eq:base-moments} and \eqref{eq:variance-regimes} therefore hold
uniformly for that transcript, so Lemma~\ref{lem:mom} may be applied
conditionally.  It gives the base cost \eqref{eq:base-cost} and the
correction costs \eqref{eq:corr-cost}.

For $1<k<2$, \eqref{eq:LJ-size} yields
\[
 \frac{\sigma^kL_J^{2-k}}{\eps^2}
 \leq C_k
 \frac{\sigma^k}{\eps^2}
 \left(\frac{\sigma^k}{\eps}\right)^{\frac{2-k}{k-1}}
 =C_k\left(\frac{\sigma}{\eps}\right)^{\frac{k}{k-1}},
\]
where the exponents satisfy
$k+k(2-k)/(k-1)=k/(k-1)$ and
$2+(2-k)/(k-1)=k/(k-1)$.  For $k=2$,
$J=O(\ell(\sigma/\eps))$.  For $1<k<2$, the exponent
$k/(k-1)>2$, so the base cost is absorbed by the correction cost; for
$k\geq2$, absorption follows directly from $J\geq1$ and the first two
lines of \eqref{eq:corr-cost}.

On the intersection of localization success and the two median-of-means
success events, Lemmas~\ref{lem:correction-moments} and
\ref{lem:terminal-bias} give
\begin{align*}
 |\widehat\mu-\mu|
 &\leq|\widehat m_0-\E W_0|
      +|\widehat m_1-\E W|\\
 &\quad+
 \left|\E[r_J(X)-r_J(c)]-(\mu-c)\right|\\
 &\leq\eps/4+\eps/4+\eps/4<\eps.
\end{align*}
The conditional failure probabilities are uniform over all successful
transcripts.  Averaging them over $\mathcal F_{\loc}$ and taking a union
bound with localization therefore gives total failure probability at most
$\delta$.

The localization block costs
$O(\ell(\lambda/\sigma)+\log(3/\delta))$ samples.  Since
$\eps\leq c_k\sigma$, its confidence term is absorbed by
\eqref{eq:base-cost}; ceilings and discarded median-of-means observations
change only universal constants.  Combining the three block sizes yields
the sample complexity in Theorem~\ref{thm:main}.

Finally, we verify non-adaptivity.  The localization code is
deterministic.  Each base query is determined by $(B,U)$, and each
correction query by $(K,B,B',U)$, all sampled before any message is
observed.  The decoded center appears only in the subtracted reference
indicators, safe-phase selection, and importance weights at the decoder.
Thus no encoder query depends on the transcript, and the protocol is fully
non-adaptive.
\end{proof}

\section{Proof of the continuous-scale refinement}
\label{app:continuous}

This appendix proves Proposition~\ref{prop:continuous-refinement}.  A device
sends a random projection of a random-binning feature, and the decoder probes
an adjacent-cell discrete gradient at the localized center.  Integrating the
resulting compactly supported kernel over a continuum of spatial scales
reproduces displacement; finite cutoffs and moment-matched scale sampling
give the three rates in Theorem~\ref{thm:main}.

By Lemma~\ref{lem:conditional-decoupling}, we may condition on the
localization transcript.  On a successful transcript, the center $c$ is
fixed, the refinement samples and their public coins retain their product
law, and
\begin{equation}
\label{eq:alt-local-moment}
 D=X-c,
 \qquad
 \E|D|^k\leq\tau^k,
 \qquad
 \tau=\Gamma_k\sigma,
\end{equation}
by \eqref{eq:centered-moment}.  Every public coin used below is generated
before any localization or refinement bit is observed.

\subsection{A signed random-bin primitive}

Fix
\[
 a=\frac14.
\]
For $t\geq0$, define the compactly supported kernel
\begin{equation}
\label{eq:alt-kernel}
 \psi_a(t)
 =\operatorname{Leb}\bigl([a,1-a]\cap[1-t,2-t)\bigr).
\end{equation}
An interval calculation gives
\begin{equation}
\label{eq:alt-kernel-pieces}
 \psi_a(t)=
 \begin{cases}
  0, &0\leq t\leq a,\\
  t-a, &a<t<1-a,\\
  1-2a, &1-a\leq t\leq1+a,\\
  2-a-t, &1+a<t<2-a,\\
  0, &t\geq2-a.
 \end{cases}
\end{equation}
Its normalization is
\begin{align}
 C_a
 &=\int_0^\infty\frac{\psi_a(t)}{t^2}\,dt
 =\int_a^{1-a}\int_{1-v}^{2-v}\frac{dt}{t^2}\,dv
 \notag\\
 &=\log\!\left(\frac{(1-a)(1+a)}{a(2-a)}\right)
 =\log\!\left(\frac{15}{7}\right).
\label{eq:alt-Ca}
\end{align}

Let $0<r_-<r_+$, and let $p$ be a positive probability density on
$[r_-,r_+]$.  Independently for each refinement device, draw the public
randomness
\begin{equation}
\label{eq:alt-public-coins}
 R\sim p,
 \qquad U\mid R\sim\Unif[0,R),
 \qquad (\xi_m)_{m\in\mathbb Z}
 \stackrel{\mathrm{iid}}{\sim}\Unif\{-1,+1\}.
\end{equation}
The coloring is independent of $(R,U,X)$ and of the localization
sigma-field, and the collections in \eqref{eq:alt-public-coins}
are independent across refinement devices.
For $r>0$ and $u\in[0,r)$, write
\[
 q_{r,u}(x)=\left\lfloor\frac{x+u}{r}\right\rfloor.
\]
The encoder observes $X$ and sends a single bit
\begin{equation}
\label{eq:alt-encoder-bit}
 B=\1\{\xi_{q_{R,U}(X)}=+1\},
 \qquad \widetilde B=2B-1.
\end{equation}
After the center $c$ has been decoded, set
\begin{equation}
\label{eq:alt-decoder-cell}
 Q=q_{R,U}(c),
 \qquad
 V=\frac{c+U}{R}-Q\in[0,1),
 \qquad
 A=\1\{a\leq V\leq1-a\}.
\end{equation}
The decoder forms
\begin{equation}
\label{eq:alt-statistic}
 Z_c
 =\frac{A}{C_ap(R)}
   (\widetilde B-\xi_Q)(\xi_{Q+1}-\xi_{Q-1}).
\end{equation}
The first difference is a same-cell control variate: it vanishes identically
when $X$ and $c$ occupy the same cell.  The second difference is an oriented
adjacent-cell gradient.  Figure~\ref{fig:random-grid-kernel} depicts this
cancellation pattern and the resulting compact kernel.

\begin{lemma}[Continuous random-grid identity]
\label{lem:alt-grid}
For every $d\in\R$,
\begin{equation}
\label{eq:alt-eta}
 \E[Z_c\mid D=d]
 =\eta_{r_-,r_+}(d)
 :=\frac{\sgn(d)}{C_a}
   \int_{r_-}^{r_+}\psi_a\!\left(\frac{|d|}{r}\right)dr.
\end{equation}
The function $\eta_{r_-,r_+}$ has the following properties:
\begin{align}
 \eta_{r_-,r_+}(d)&=d
 &&\text{if }(2-a)r_-\leq|d|\leq ar_+,
 \label{eq:alt-exact-window}\\
 d\,\eta_{r_-,r_+}(d)&\geq0,
 \qquad |\eta_{r_-,r_+}(d)|\leq|d|
 &&\text{for every }d.
 \label{eq:alt-sign-bound}
\end{align}
Moreover,
\begin{equation}
\label{eq:alt-second-master}
 \E[Z_c^2\mid D=d]
 \leq\frac{16}{C_a^2}
 \int_{r_-}^{r_+}
 \1\left\{r\leq\frac{|d|}{a}\right\}\frac{dr}{p(r)}.
\end{equation}
\end{lemma}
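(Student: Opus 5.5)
The plan is to condition successively on $D=d$, then on $R=r$ and $U$, and to average over the colors $\xi$ first, then over the shift $U$, and finally over $R\sim p$. Throughout, $X=c+d$ is fixed and we write $m=q_{r,U}(X)$ for the cell containing $X$.

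\textbf{Color average.} Since $\widetilde B=\xi_m$, the color average of the product $(\xi_m-\xi_Q)(\xi_{Q+1}-\xi_{Q-1})$ follows from pairwise independence and centering: it equals $\1\{m=Q+1\}-\1\{m=Q-1\}$. The case $m=Q$ gives zero identically, and for $|m-Q|\ge2$ every cross term has mean zero.

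\textbf{Shift average.} With $V$ as in \eqref{eq:alt-decoder-cell}, $V$ is uniform on $[0,1)$ because $U\sim\Unif[0,r)$. The cell of $X$ is $Q+\lfloor V+d/r\rfloor$.

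For $d>0$, the condition $m=Q+1$ means $1-d/r\le V<2-d/r$, and $m=Q-1$ cannot occur. Adding the gate $a\le V\le 1-a$, the probability is exactly $\psi_a(d/r)$ by \eqref{eq:alt-kernel}.

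For $d<0$, the symmetric computation applies with $V\mapsto 1-V$; the gate is symmetric under this map, and ties have measure zero. It gives the value $-\psi_a(|d|/r)$.

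\textbf{Scale average.} Multiplying by $1/(C_ap(r))$ and integrating against $p(r)\,dr$ on $[r_-,r_+]$ cancels $p$ and yields \eqref{eq:alt-eta}.

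\textbf{Properties of $\eta$.} Substitute $t=|d|/r$, so $dr=|d|\,dt/t^2$. The untruncated integral over $r\in(0,\infty)$ is then $|d|C_a$ by \eqref{eq:alt-Ca}, which is the reproduction identity \eqref{eq:continuous-main-reproduction}. Since $\psi_a(t)$ vanishes unless $a<t<2-a$, only $r\in(|d|/(2-a),|d|/a)$ contributes. If $(2-a)r_-\le|d|\le ar_+$, this whole range lies inside $[r_-,r_+]$, so truncation changes nothing and \eqref{eq:alt-exact-window} holds. For general $d$, the integrand is nonnegative and the truncated integral is at most the full one, which gives \eqref{eq:alt-sign-bound}.

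\textbf{Second moment.} The factor $(\widetilde B-\xi_Q)^2$ is at most $4\cdot\1\{m\ne Q\}$, and $(\xi_{Q+1}-\xi_{Q-1})^2\le4$. Together with the gate this gives the pointwise bound
\[
Z_c^2\le \frac{16\,A\,\1\{m\ne Q\}}{C_a^2p(R)^2}.
\]
On $A$, the point $c$ lies at distance at least $aR$ from both boundaries of its cell. So $m\ne Q$ forces $|d|\ge aR$, i.e.\ $R\le|d|/a$. Bounding the indicator by this event, dropping $A\le1$, and integrating against $p(r)\,dr$ leaves $\int_{r_-}^{r_+}\1\{r\le|d|/a\}\,dr/p(r)$ times $16/C_a^2$, which is \eqref{eq:alt-second-master}.

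\textbf{Main obstacle.} The only delicate step is the shift-averaging calculation in the second paragraph. It requires checking orientation and sign for $d<0$, and checking that the cell-index condition combined with the gate reproduces precisely the Lebesgue measure in \eqref{eq:alt-kernel}. The remaining steps are direct.
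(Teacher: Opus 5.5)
Your proposal is correct and follows essentially the same route as the paper. Both arguments use the same chain of steps:
\begin{itemize}
\item the Rademacher cross-term identity;
\item shift-averaging of the uniform fractional coordinate $V$ to produce $\psi_a(|d|/r)$, with the reflection $V\mapsto 1-V$ handling $d<0$;
\item cancellation of $p$, then the substitution $t=|d|/r$ together with the support $(|d|/(2-a),|d|/a)$ for the exact window and the sign and magnitude bounds;
\item the gate implication $A=1,\ q_{R,U}(X)\neq Q\Rightarrow R\le|d|/a$ for the second moment.
\end{itemize}
Your explicit cell formula $q_{R,U}(X)=Q+\lfloor V+d/r\rfloor$ simply makes the paper's adjacency step more transparent.
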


\begin{proof}
For arbitrary integers $m$ and $j$, independence and centering of the
Rademacher signs give
\begin{equation}
\label{eq:alt-rademacher-identity}
 \E_\xi[(\xi_m-\xi_j)(\xi_{j+1}-\xi_{j-1})]
 =\1\{m=j+1\}-\1\{m=j-1\}.
\end{equation}
Condition on $R=r$ and $D=d>0$.  The fractional coordinate $V$ in
\eqref{eq:alt-decoder-cell} is uniform on $[0,1)$.  With $t=d/r$, the
sample lies in the right-adjacent cell exactly when
$1\leq V+t<2$.  Intersecting this event with the gate
$V\in[a,1-a]$ gives probability $\psi_a(t)$.  For $d<0$, reflection about
the center of $[0,1)$ gives the same probability for the left-adjacent
cell.  Combining this observation with
\eqref{eq:alt-rademacher-identity}, then averaging over $R$, cancels
$p(R)$ and proves \eqref{eq:alt-eta}.

For $d\neq0$, the substitution $t=|d|/r$ and
\eqref{eq:alt-Ca} give the full-scale identity
\begin{equation}
\label{eq:alt-full-reproduction}
 \frac{\sgn(d)}{C_a}
 \int_0^\infty\psi_a\!\left(\frac{|d|}{r}\right)dr
 =\frac{d}{C_a}\int_0^\infty\frac{\psi_a(t)}{t^2}\,dt
 =d.
\end{equation}
The integrand is nonzero only for
\[
 \frac{|d|}{2-a}<r<\frac{|d|}{a}.
\]
This support is contained in $[r_-,r_+]$ under the condition in
\eqref{eq:alt-exact-window}, which proves exact reproduction there.
Truncating the nonnegative magnitude integral in
\eqref{eq:alt-full-reproduction} preserves its weak direction and can only
decrease its magnitude, which proves \eqref{eq:alt-sign-bound}.

For the second moment, each difference in \eqref{eq:alt-statistic} has
absolute value at most two.  If $q_{R,U}(X)=Q$, the first difference is
zero.  On $A=1$, the center is at distance at least $aR$ from both cell
boundaries, so
\[
 A=1,\quad q_{R,U}(X)\neq Q
 \quad\Longrightarrow\quad |D|\geq aR.
\]
Consequently,
\[
 Z_c^2
 \leq\frac{16}{C_a^2p(R)^2}
 \1\left\{R\leq\frac{|D|}{a}\right\}.
\]
Averaging over $R$ proves \eqref{eq:alt-second-master}.
\end{proof}

For every realization of $(R,U,\xi)$, the encoder query in
\eqref{eq:alt-encoder-bit} is the Borel set
\begin{equation}
\label{eq:alt-query-set}
 \bigcup_{m:\,\xi_m=+1}[mR-U,(m+1)R-U).
\end{equation}
The set definition is independent of $c$.  The infinite public coloring is valid in
the arbitrary measurable-query model; only pairwise independence of its
coordinates is needed for \eqref{eq:alt-rademacher-identity}.

\subsection{Finite scale cutoffs}

Assume $0<\eps\leq\tau$, and choose
\begin{equation}
\label{eq:alt-cutoffs}
 r_-=\frac{\eps}{8(2-a)}=\frac{\eps}{14},
 \qquad
 r_+=\frac1a\left(\frac{8\tau^k}{\eps}\right)^{1/(k-1)}.
\end{equation}
Thus the exact window in \eqref{eq:alt-exact-window} is
\begin{equation}
\label{eq:alt-explicit-window}
 \frac{\eps}{8}\leq|d|\leq
 \left(\frac{8\tau^k}{\eps}\right)^{1/(k-1)}.
\end{equation}

\begin{lemma}[Continuous-scale truncation bias]
\label{lem:alt-bias}
Under \eqref{eq:alt-local-moment} and \eqref{eq:alt-cutoffs},
\begin{equation}
\label{eq:alt-bias}
 \left|\E Z_c-\E D\right|\leq\frac{\eps}{4}.
\end{equation}
\end{lemma}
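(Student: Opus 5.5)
By Lemma~\ref{lem:alt-grid}, conditioning on $D$ and then averaging gives $\E Z_c=\E\,\eta_{r_-,r_+}(D)$. Hence
\[
 \E Z_c-\E D=\E\bigl[\eta_{r_-,r_+}(D)-D\bigr].
\]
On the exact window \eqref{eq:alt-explicit-window} the integrand vanishes by \eqref{eq:alt-exact-window}. So I would split the complement into a small-displacement region $|D|<\eps/8$ and a tail region $|D|>M$, where $M:=(8\tau^k/\eps)^{1/(k-1)}$. By \eqref{eq:alt-sign-bound}, $\eta$ has the sign of $d$ and $|\eta(d)|\leq|d|$. This gives the pointwise bound
\[
 |\eta_{r_-,r_+}(d)-d|=|d|-|\eta_{r_-,r_+}(d)|\leq|d|,
\]
and consequently
\[
 \bigl|\E Z_c-\E D\bigr|\leq\E\bigl[|D|\1\{|D|<\eps/8\}\bigr]+\E\bigl[|D|\1\{|D|>M\}\bigr].
\]

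The first term is at most $\eps/8$ trivially. For the second term I would use the tail bound $|D|\1\{|D|>M\}\leq|D|^k/M^{k-1}$ together with \eqref{eq:alt-local-moment}:
\[
 \E\bigl[|D|\1\{|D|>M\}\bigr]\leq\frac{\tau^k}{M^{k-1}}=\frac{\tau^k\eps}{8\tau^k}=\frac{\eps}{8}.
\]
Summing the two regions gives $\eps/4$, which is \eqref{eq:alt-bias}.

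Before this I would check two points. First, the window must be nonempty, i.e.\ $\eps/8\leq M$; this follows from $\eps\leq\tau$, since then $8\tau^k/\eps\geq8\tau^{k-1}$ and so $M\geq8^{1/(k-1)}\tau\geq\eps/8$. Second, the cutoffs \eqref{eq:alt-cutoffs} must match the hypothesis of \eqref{eq:alt-exact-window}: $(2-a)r_-=\eps/8$ and $ar_+=M$ by construction. Integrability of $D$ and of $\eta(D)$ follows from $k>1$ and $|\eta(d)|\leq|d|$, so the expectation can be split freely.

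I expect no step to be a genuine obstacle. The only care needed is to use the one-sided property $|\eta|\leq|d|$ with matching sign, rather than a cruder bound $|\eta-d|\leq2|d|$, because the latter would double each contribution and exceed $\eps/4$.
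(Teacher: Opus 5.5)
Your proposal is correct and follows essentially the same argument as the paper. The paper also writes $\E Z_c=\E\,\eta_{r_-,r_+}(D)$ and uses exact reproduction on $\eps/8\le|D|\le(8\tau^k/\eps)^{1/(k-1)}$ together with $|\eta(d)-d|\le|d|$ elsewhere, bounding the small-displacement part by $\eps/8$ and the tail by $\E|D|^k/H^{k-1}\le\eps/8$. Your nonemptiness check on the window is harmless but unnecessary, since the two-region bound holds even if the regions overlap.
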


\begin{proof}
Put
\[
 L=(2-a)r_-=\frac\eps8,
 \qquad
 H=ar_+=\left(\frac{8\tau^k}{\eps}\right)^{1/(k-1)}.
\]
Lemma~\ref{lem:alt-grid} gives exact reproduction when
$L\leq|D|\leq H$, and \eqref{eq:alt-sign-bound} gives
$|\eta_{r_-,r_+}(D)-D|\leq|D|$ elsewhere.  Hence
\begin{align*}
 \left|\E Z_c-\E D\right|
 &\leq L+\E[|D|\1\{|D|>H\}]\\
 &\leq\frac\eps8+\frac{\E|D|^k}{H^{k-1}}
 \leq\frac\eps8+\frac\eps8.
\end{align*}
\end{proof}

\subsection{Moment-matched continuous scale sampling}

The density $p$ now depends on the moment regime.  All three choices use
only the known parameters $(k,\sigma,\eps)$ through $\tau$, $r_-$, and
$r_+$; none depends on $c$ or the localization transcript.

\begin{lemma}[Continuous scale allocation]
\label{lem:alt-scales}
There is a density $p$ on $[r_-,r_+]$ for which the statistic
\eqref{eq:alt-statistic} satisfies
\begin{equation}
\label{eq:alt-three-variances}
 \E Z_c^2\leq
 \begin{cases}
  C_k\tau^2, &k>2,\\
  C\tau^2\log(e\tau/\eps), &k=2,\\
  C_k\tau^kr_+^{2-k}, &1<k<2.
 \end{cases}
\end{equation}
Consequently,
\begin{equation}
\label{eq:alt-three-rates}
 \frac{\E Z_c^2}{\eps^2}\leq
 \begin{cases}
  C_k(\tau/\eps)^2, &k>2,\\
  C(\tau/\eps)^2\log(e\tau/\eps), &k=2,\\
  C_k(\tau/\eps)^{k/(k-1)}, &1<k<2.
 \end{cases}
\end{equation}
\end{lemma}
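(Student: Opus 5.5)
Everything follows from \eqref{eq:alt-second-master}: take expectations over $D$ with the three densities named in the proof sketch of Proposition~\ref{prop:continuous-refinement}, and bound the result with the moment condition \eqref{eq:alt-local-moment}. Write $N=\int_{r_-}^{r_+}w(r)\,dr$ for the normalizer of $p=w/N$. Then $1/p(r)=N/w(r)$, and the master inequality becomes
\[
 \E Z_c^2\le \frac{16N}{C_a^2}\,\E\!\left[\int_{r_-}^{\min\{r_+,|D|/a\}}\frac{dr}{w(r)}\right].
\]
The common scheme in each regime is to compute the inner integral $G(|D|)$ in closed form and then bound $\E\,G(|D|)$ by Jensen or Markov from $\E|D|^k\le\tau^k$.

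\emph{Case $1<k<2$.} Take $w(r)=r^{1-k}$. Then $N\le r_+^{2-k}/(2-k)$, since $2-k>0$ and the integral is dominated by its upper endpoint. Also $G(s)\le\int_0^{s/a}r^{k-1}dr=(s/a)^k/k$, so $\E G\le C_k\tau^k$. Together these give $\E Z_c^2\le C_k\tau^k r_+^{2-k}$. Dividing by $\eps^2$ and substituting $r_+\asymp_k(\tau^k/\eps)^{1/(k-1)}$ produces $(\tau/\eps)^{k/(k-1)}$ by the same exponent arithmetic used in the proof of Theorem~\ref{thm:main}.

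\emph{Case $k=2$.} Take $w(r)=r^{-1}$, so $N=\log(r_+/r_-)=\log(112\tau^2/\eps^2)\le C\log(e\tau/\eps)$, using $\eps\le\tau$. Here $G(s)\le (s/a)^2/2$, hence $\E G\le 8\tau^2$, and the claimed bound follows.

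\emph{Case $k>2$.} Take $w(r)=\tau^{-1}$ on $r\le\tau$ and $w(r)=\tau^{k-2}r^{1-k}$ on $r>\tau$. The normalizer satisfies $N\le 1+\tau^{k-2}\tau^{2-k}/(k-2)=O_k(1)$. For $G$, the first piece contributes at most $\tau\min\{s/a,\tau\}\le 4\tau s$. The second piece contributes $\tau^{2-k}\int_\tau^{s/a}r^{k-1}dr\le \tau^{2-k}(s/a)^k/k$, and only when $s>a\tau$.

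The main obstacle is this last step: taking the expectation so that the answer is $O_k(\tau^2)$ and not something weaker. The first piece gives $4\tau\E|D|\le4\tau^2$ by Lyapunov. The second gives $\tau^{2-k}\E|D|^k\cdot 4^k/k\le C_k\tau^2$. This is exactly why the density switches to $r^{1-k}$ above $\tau$: that weight makes $1/w$ grow like $r^{k-1}$, matching the $k$-th moment, while the flat part below $\tau$ keeps the small scales cheap.

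For \eqref{eq:alt-three-rates}, divide the three bounds by $\eps^2$. The first two lines are immediate, and the third line was already obtained in the case $1<k<2$ above.
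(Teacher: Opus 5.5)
Your proposal is correct and matches the paper's proof: you use the same three densities and the same normalizer bounds, evaluate the inner integral of \eqref{eq:alt-second-master} in closed form, and finish with $\E|D|\le\tau$ and $\E|D|^k\le\tau^k$. One harmless arithmetic slip is that at $k=2$ you have $r_+=4\cdot 8\tau^2/\eps$, so $r_+/r_-=448(\tau/\eps)^2$ rather than $112(\tau/\eps)^2$; this does not affect $N_2\le C\log(e\tau/\eps)$.
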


\begin{proof}
Suppose first that $1<k<2$.  Let
\[
 N_< =\int_{r_-}^{r_+}r^{1-k}\,dr
 =\frac{r_+^{2-k}-r_-^{2-k}}{2-k},
 \qquad
 p(r)=\frac{r^{1-k}}{N_<}.
\]
Applying \eqref{eq:alt-second-master} and then integrating only up to
$|D|/a$ gives
\begin{align*}
 \E Z_c^2
 &\leq\frac{16N_<}{C_a^2}
   \E\int_{r_-}^{r_+}
   \1\left\{r\leq\frac{|D|}{a}\right\}r^{k-1}\,dr\\
 &\leq\frac{16N_<}{kC_a^2a^k}\E|D|^k
 \leq C_k\tau^kr_+^{2-k}.
\end{align*}
Substitution of \eqref{eq:alt-cutoffs} gives the last line of
\eqref{eq:alt-three-rates}.

For $k=2$, take the log-uniform density
\[
 N_2=\log\frac{r_+}{r_-},
 \qquad
 p(r)=\frac{1}{N_2r}.
\]
Equation~\eqref{eq:alt-second-master} yields
\[
 \E Z_c^2
 \leq\frac{8N_2}{C_a^2a^2}\E D^2
 \leq C N_2\tau^2.
\]
Here
\[
 \frac{r_+}{r_-}=448\left(\frac\tau\eps\right)^2,
\]
so $N_2=O(\log(e\tau/\eps))$.

Finally, suppose $k>2$.  Define
\[
 q(r)=
 \begin{cases}
  \tau^{-1}, &r_-\leq r\leq\tau,\\
  \tau^{k-2}r^{1-k}, &\tau<r\leq r_+,
 \end{cases}
 \qquad
 N_>=\int_{r_-}^{r_+}q(r)\,dr,
 \qquad p(r)=\frac{q(r)}{N_>}.
\]
The assumption $\eps\leq\tau$ ensures $r_-<\tau<r_+$, and
\begin{equation}
\label{eq:alt-light-normalizer}
 \frac{13}{14}\leq N_>
 \leq1+\frac{1}{k-2}.
\end{equation}
For every $u\geq0$,
\[
 \int_{r_-}^{r_+}\1\{r\leq u\}\frac{dr}{p(r)}
 \leq N_>\left[
  \tau\min\{u,\tau\}+\frac{\tau^{2-k}u^k}{k}
 \right].
\]
Use $u=|D|/a$ in \eqref{eq:alt-second-master}.  Lyapunov's inequality
and \eqref{eq:alt-local-moment} give $\E|D|\leq\tau$ and
$\E|D|^k\leq\tau^k$, and therefore
\[
 \E Z_c^2
 \leq\frac{16N_>\tau^2}{C_a^2}
 \left(\frac1a+\frac{1}{ka^k}\right)
 \leq C_k\tau^2.
\]
This proves \eqref{eq:alt-three-variances}; dividing by $\eps^2$ proves
the first two lines of \eqref{eq:alt-three-rates}, while the heavy-tail
algebra was established in the first case.
\end{proof}

\subsection{Alternative proof of the main theorem}

Generate independently in advance one random scale, shift, and cell coloring
for each refinement sample, using the appropriate density from
Lemma~\ref{lem:alt-scales}.  After all bits have arrived and the localization
center $c$ has been decoded, form independent copies
$Z_{c,1},\ldots,Z_{c,n}$ of \eqref{eq:alt-statistic}.  Conditional on the
full localization transcript, Lemmas~\ref{lem:alt-bias} and
\ref{lem:alt-scales} apply uniformly on the localization-success event.

Apply Lemma~\ref{lem:mom} with target deviation $\eps/2$ and failure
probability $\delta/2$, and let $\widehat m$ be the resulting median of
means.  With the refinement sample sizes in \eqref{eq:alt-three-rates}
multiplied by $C\log(2/\delta)$,
\[
 |\widehat m-\E Z_c|\leq\frac\eps2
\]
with conditional probability at least $1-\delta/2$.  On this event,
\[
 |c+\widehat m-\mu|
 \leq|\widehat m-\E Z_c|
    +|\E Z_c-\E(X-c)|
 \leq\frac{3\eps}{4}<\eps.
\]
The localization block fails with probability at most $\delta/2$; averaging
the conditional guarantee over successful transcripts and taking a union
bound gives total failure probability at most $\delta$.

The resulting sample complexity is exactly the order in
Theorem~\ref{thm:main}, because $\tau=\Gamma_k\sigma$.  Every encoder query
is fixed by $(R,U,\xi)$ before communication and is independent of $c$.
Only the decoder-side gate, control variate, and adjacent-cell gradient use
$c$.  This proves Proposition~\ref{prop:continuous-refinement}; together
with the shared localization block, it gives a second proof of the upper
bound in Theorem~\ref{thm:main}.

\section{Numerical experiments and reproducibility}
\label{app:implementation}

\subsection{Experimental setup}

We evaluate implementations of both refinement constructions.  The dyadic
implementation has separate base and correction blocks over $J$ discrete
scales.  The continuous implementation uses one refinement block, samples a
scale in $[r_-,r_+]$, and realizes the cell colors with a
pairwise-independent affine family.  Table~\ref{tab:implementations} summarizes the
structural differences.

\begin{table}[t]
\centering
\small
\setlength{\tabcolsep}{6pt}
\setlength{\belowcaptionskip}{6pt}
\renewcommand{\arraystretch}{1.22}
\caption{Structural comparison of the two refinement implementations.}
\label{tab:implementations}
\begin{tabular}{lcc}
\toprule
Property & Dyadic & Continuous\\
\midrule
Scale support & $J$ levels & $[r_-,r_+]$\\
Refinement blocks & base $+$ correction & one\\
Public variables & $K,B,B',U$ & $R,U,\xi$\\
Cell coloring & none & pairwise-independent\\
\bottomrule
\end{tabular}
\end{table}

We set $\sigma=1$, $\mu=0.2$, and supply the decoder with $c=0$, using
$\tau=[2^{k-1}(1+0.2^k)]^{1/k}$.  For $k=3$ we use a centered Gaussian
rescaled to have third absolute central moment one.  For $k=2$ and $k=1.5$
we use symmetric Pareto laws with tail indices $2.3$ and $1.7$, respectively,
again rescaled to have $k$-th central moment one.  At each of five target
ratios we draw $400{,}000$ independent observations.

\subsection{Analytical evaluation of public randomness}

Direct simulation of the importance-weighted 1-bit statistics has high
Monte Carlo variance, especially when $k<2$.  We therefore integrate the
public randomness analytically and sample only $X$.  For the dyadic
construction, let
\[
 \Delta_j(x;c)=r_j(x)-r_j(c),\qquad 0\leq j\leq J,
\]
where each $r_j$ uses the safe phase selected by $c$.  Lemma~\ref{lem:correlated-bit}
and the phase probabilities give the conditional identities
\begin{align}
 \E[W_0\mid X=x]&=\Delta_0(x;c),
 &\E[W_0^2\mid X=x]&=2L_0|\Delta_0(x;c)|,\label{eq:numerical-dyadic-base}\\
 \E[W\mid X=x]&=\sum_{j=0}^{J-1}
   \bigl(\Delta_{j+1}(x;c)-\Delta_j(x;c)\bigr),
 &\E[W^2\mid X=x]&=12\sum_{j=0}^{J-1}
   \frac{L_j}{p_j}|\Delta_{j+1}(x;c)-\Delta_j(x;c)|.
 \label{eq:numerical-dyadic-correction}
\end{align}
Thus the integrated conditional mean is exactly $\Delta_J(x;c)$ and the
second moment is the sum of the two right-hand expressions.

For the continuous construction, put $D=x-c$, $a=1/4$, and
\[
 \chi_a(t)=
 \begin{cases}
  0,&0\leq t\leq a,\\
  t-a,&a<t<1-a,\\
  1-2a,&t\geq1-a.
 \end{cases}
\]
Averaging first over the shift and the pairwise-independent cell signs yields
the exact conditional second moment
\begin{equation}
 \E[Z_c^2\mid X=x]
 =\frac{4}{C_a^2}\int_{r_-}^{r_+}
   \frac{\chi_a(|D|/r)}{p(r)}\,dr.
 \label{eq:numerical-continuous-second}
\end{equation}
We evaluate this piecewise integral and the corresponding conditional-mean
integral in closed form.  Literal 1-bit simulations provide a separate
comparison with these Rao--Blackwellized values.

\subsection{Rate validation}

Define the confidence-free refinement rate
\begin{equation}
\label{eq:numerical-rate}
 \mathfrak v_k(\tau,\eps)=
 \begin{cases}
  (\tau/\eps)^2, &k>2,\\
  (\tau/\eps)^2\log(e\tau/\eps), &k=2,\\
  (\tau/\eps)^{k/(k-1)}, &1<k<2.
 \end{cases}
\end{equation}

\begin{figure}[t]
\centering
\includegraphics[width=.85\linewidth]{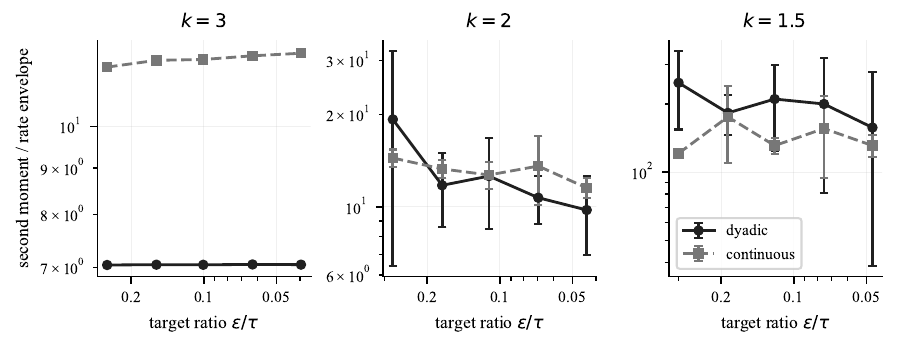}
\caption{Rao--Blackwellized numerical evaluation of the refinement rate.
Each point uses $400{,}000$ independent data draws; vertical bars are $95\%$
normal intervals computed from the conditional second-moment integrand.  The
ordinate is $(\E W_0^2+\E W^2)/(\eps^2\mathfrak v_k)$ for the dyadic
construction and $\E Z_c^2/(\eps^2\mathfrak v_k)$ for the continuous
construction.}
\label{fig:rate-validation}
\end{figure}

Figure~\ref{fig:rate-validation} plots the implemented second moment divided
by $\eps^2\mathfrak v_k(\tau,\eps)$.  The normalized moments remain within
construction-- and regime-specific finite ranges on the tested five-point
grid.  Across the 30 configurations, the largest observed absolute bias was
$0.060\eps$.  The largest paired standardized discrepancy between literal
1-bit simulation and analytical public-coin integration was $1.97$.  Each
pair uses the same observation, and the discrepancy is normalized by the
empirical standard error of the paired differences.

\subsection{Scale-allocation check}

Equation~\eqref{eq:corr-second} and the $k$-moment tail bound reduce the
worst-case correction envelope, up to a $k$-dependent constant, to
\begin{equation}
 \mathcal V(p)=\sum_{j=0}^{J-1}
       \frac{\tau^kL_j^{2-k}}{p_j}.
 \label{eq:allocation-objective}
\end{equation}
Under $\sum_jp_j=1$, Cauchy--Schwarz shows that its minimizer is
$p_j\propto L_j^{(2-k)/2}$, exactly the law in
\eqref{eq:scale-probabilities}.  Figure~\ref{fig:allocation-ablation} evaluates
this deterministic objective for the matched law, the uniform law, and laws
tuned to $k=3$ and $k=1.5$.  The $k=2$ boundary selects the uniform law, while
using the light-tail allocation at $k=1.5$ inflates the envelope by a factor
of $35.5$ in this configuration.

\begin{figure}[!ht]
\centering
\includegraphics[width=.69\linewidth]{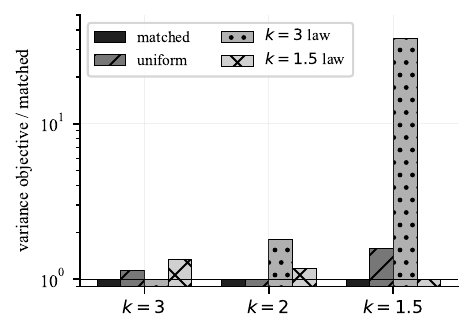}
\caption{Deterministic verification of the scale allocation at
$\sigma=1$, $\mu=0.2$,
$c=0$, and $\eps=0.1$.  Bars show
$\mathcal V(p)/\mathcal V(p^{\star})$ from
\eqref{eq:allocation-objective}; $p^{\star}$ is the matched allocation.
Hatching and gray levels distinguish candidate laws without relying on
color.  The logarithmic ordinate exposes the cost of a regime-mismatched
allocation.}
\label{fig:allocation-ablation}
\end{figure}

\subsection{Numerical consistency and reproducibility}

The mechanism tests verify safe-phase displacement and the pointwise
telescope on $1{,}000$ random $(x,c)$ pairs; the maximum floating-point
residual was $1.42\times10^{-14}$.

The artifact includes a dependency specification, unit tests, raw CSV/JSON
records, plotting code, and deterministic seed schedules.  The experiments
use Python 3.10.11, NumPy 2.2.6, and SciPy 1.15.3 on an ARM macOS 15.5
machine, without a GPU.  Detailed installation, testing, and execution
instructions are provided in the accompanying repository.  The full
validation uses $400{,}000$ data draws for each construction, regime, and accuracy
in Figure~\ref{fig:rate-validation}; all plotted points and interval
half-widths are read directly from the generated CSV file.

\section{Borel representation of the query sets}
\label{app:query-sets}

This section represents each encoder query as a Borel subset of $\R$.  The
query set is fixed before communication, whereas its decoder weight may
depend on the completed transcript.

\subsection{Periodic base and correction queries}

Condition on a realization $(B,U)=(b,u)$ of the public coins of one base
device.  Its query is the indicator of the set
\begin{equation}
\label{eq:direct-base-set}
 \begin{aligned}
 A^{(0)}_{b,u}
 &=\{x\in\R:\rho_{L_0,b}(x)\geq u\}\\
 &=\bigcup_{m\in\mathbb Z}
 \left[\frac{bL_0}{2}+mL_0+u,
       \frac{bL_0}{2}+(m+1)L_0\right).
 \end{aligned}
\end{equation}
Thus the base bit is simply $Y=\1\{X\in A^{(0)}_{b,u}\}$ for a countable
union of half-open intervals fixed by the public seed.

Likewise, conditional on $(K,B,B',U)=(j,b,b',u)$, a correction device uses
\begin{equation}
\label{eq:direct-correction-set}
 A^{(1)}_{j,b,b',u}
 =\{x\in\R:f_{j,b,b'}(x)\geq u\}.
\end{equation}
Both residue functions in \eqref{eq:fj-pair} are right-continuous and Borel
measurable.  Moreover, $\rho_{L_j,b}$ has period $L_j$, whereas
$\rho_{L_{j+1},b'}$ has period $L_{j+1}=2L_j$.  Their difference is therefore
$2L_j$-periodic and piecewise constant on the locally finite partition formed
by the two boundary grids.  Consequently,
$A^{(1)}_{j,b,b',u}$ is a countable union of half-open intervals and is Borel.
Neither set definition depends on the center $c$.

For the continuous alternative, condition on $(R,U,\xi)$.  The encoder set
is exactly the periodic coloring set in \eqref{eq:alt-query-set}, again a
countable union of half-open intervals.  The infinite sign sequence is merely
a convenient representation of a public random Borel set.  The calculation
in \eqref{eq:alt-rademacher-identity} requires only pairwise independence of
the cell colors.

\begin{proposition}[Fixed Borel implementation]
\label{prop:fixed-borel}
Condition on all public seeds drawn before sampling.  Every encoder in the
discrete-scale protocol and in the continuous alternative applies a
deterministic indicator of a Borel subset of $\R$.  The collection
of these sets is independent of the localization transcript.  Hence both
protocols are fully non-adaptive in the sense of
Definition~\ref{def:nonadaptive}.
\end{proposition}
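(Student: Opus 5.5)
The plan is to verify the three claims of the proposition in order: Borel measurability of every encoder set, independence of the set collection from the localization transcript, and then the conclusion via Definition~\ref{def:nonadaptive}.

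\textbf{Measurability.} Condition on a fixed realization of all public seeds. For the localization block, Proposition~\ref{prop:localization} supplies a deterministic code, so each of its query sets is a fixed measurable set. We take these to be Borel, as threshold or interval-type sets of the cited construction; alternatively, any measurable set used by that code may be replaced by a Borel set differing on a null set, leaving the transmitted bits unchanged almost surely.

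For a base device with seed $(b,u)$, the set is $A^{(0)}_{b,u}$ in \eqref{eq:direct-base-set}. It is an explicit countable union of half-open intervals, hence Borel.

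For a correction device with seed $(j,b,b',u)$, we use that $f_{j,b,b'}$ is Borel measurable. Each residue $\rho_{L,b}$ in \eqref{eq:residue} is a composition of affine maps with the floor function, which is Borel. Hence $\{f_{j,b,b'}\ge u\}$ is Borel. To also recover the interval description, note that $f_{j,b,b'}$ has period $2L_j$ and is affine on each cell of the locally finite partition cut out by $\mathcal G_{L_j,b}\cup\mathcal G_{L_{j+1},b'}$. On each cell the superlevel set is a subinterval, so the whole set is a countable union of intervals.

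For the continuous block with seed $(R,U,\xi)$, the set is \eqref{eq:alt-query-set}, again a countable union of half-open intervals. The threshold comparisons use $\le$ with $U$ continuous, so the convention choice affects no set up to measure zero.

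\textbf{Independence from the transcript.} By the protocol summary in Section~\ref{sec:protocol-summary} and \eqref{eq:alt-public-coins}, every seed is drawn before any message, from distributions depending only on $(k,\lambda,\sigma,\eps,\delta)$. Each set is a deterministic function of its seed alone, and $c$ never enters \eqref{eq:direct-base-set}, \eqref{eq:direct-correction-set}, or \eqref{eq:alt-query-set}. The seeds are independent of all samples, so the family of sets is independent of the localization transcript; this is the same independence used in Lemma~\ref{lem:conditional-decoupling}. The decoded center $c$ enters only the decoder maps \eqref{eq:base-stat}, \eqref{eq:correction-stat}, and \eqref{eq:alt-statistic}, namely the reference indicators, safe-phase selection, weights, and gate.

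\textbf{Conclusion.} Every $A_t$ is thus fixed given the seed, and no $A_t$ depends on any transmitted bit. This is exactly Definition~\ref{def:nonadaptive}.

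The only delicate point is the correction set's description as intervals, which rests on the local finiteness of the union of the two shifted grids. Since Borel measurability already follows from composing Borel maps, that description is a convenience rather than an obstacle.
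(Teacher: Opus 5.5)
Your proposal is correct and follows essentially the paper's route: it writes the base, correction, and coloring queries as countable unions of half-open intervals fixed by the seed alone, and it gets independence from the localization transcript because every refinement seed is drawn before communication and independently of the localization samples (your ``affine on each cell'' can be sharpened to piecewise constant, since both residues have slope one). Drop your fallback of replacing a merely measurable localization set by a Borel set that differs on a Lebesgue-null set: $P$ may have atoms, so that replacement can change the transmitted bit with positive probability, and you should instead rely, as the paper does, on the cited code supplying fixed sets.
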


\begin{proof}
The localization sets are fixed by Proposition~\ref{prop:localization}.
Equations \eqref{eq:direct-base-set} and
\eqref{eq:direct-correction-set} establish the claim for the two refinement
blocks of the discrete construction, while \eqref{eq:alt-query-set}
establishes it for the continuous construction.  Their seeds are sampled
jointly before any data are observed and contain no transcript-dependent
quantity.  Conditional on the seeds, the transmitted bit is therefore
$\1\{X_i\in A_i\}$ for a fixed Borel set $A_i$.  Since the refinement seeds
and samples are also independent of the localization block, the collection
$(A_i)_i$ is independent of the decoded center and of the entire localization
transcript.
\end{proof}

\subsection{Separation of decoder and query adaptation}

Let $\mathcal F_{\mathrm{pub}}$ be the sigma-field generated by all public
coins and let $\mathcal F_{\mathrm{loc}}$ be generated by the localization
bits.  Each refinement query set $A_i$ is
$\mathcal F_{\mathrm{pub}}$-measurable, whereas
$c$ is measurable with respect to
$\mathcal F_{\mathrm{pub}}\vee\mathcal F_{\mathrm{loc}}$.  The selected safe
phases $b_j(c)$, the subtracted threshold indicators in
\eqref{eq:base-stat} and \eqref{eq:correction-stat}, and the gate and control
variate in \eqref{eq:alt-statistic} are all decoder-side functions evaluated
only after the bits have arrived.  They can change the numerical weight
assigned to a bit, including setting it to zero, but cannot change the event
$\{X_i\in A_i\}$ whose truth value the device already transmitted.

This is the same distinction used in randomized sketching: a sketch may be
drawn and applied obliviously, while its reconstruction map depends on side
information learned later.

\subsection{Scope of the result}

The theorem is information-theoretic and one-dimensional.  The public query
plan is compiled with $(k,\lambda,\sigma,\eps,\delta)$ known and uses shared
randomness, as stated in Definition~\ref{def:nonadaptive} and the protocol
summary in Section~\ref{sec:protocol-summary}.  Its arbitrary measurable
query sets are
generally nonlocal, countable unions of intervals; consequently, the theorem
does not remove the lower bounds for non-adaptive threshold and interval
queries.

Finally, a coordinate-wise extension to multivariate means would generally
fail to capture high-dimensional geometry and need not preserve the optimal
dependence on dimension.  Extending universal decoder-side refinement beyond
the scalar setting therefore requires additional ideas.

\end{document}

%% file: fig_main_mechanism.tex
\begin{tikzpicture}[
  x=1cm,
  y=1cm,
  font=\sffamily\footnotesize,
  gridline/.style={draw=black!60,line width=.65pt},
  boundary/.style={draw=black!54,dashed,line width=.60pt},
  centerline/.style={draw=black!62,densely dotted,line width=.60pt}
]
  % Two period-L boundary grids, shifted by L/2.
  \node[anchor=east,text=black!70] at (1.40,2.08)
    {$\mathcal G_{L,0}$};
  \node[anchor=east,text=black!88,font=\sffamily\footnotesize\bfseries]
    at (1.40,.90) {$\mathcal G_{L,1}$ (safe)};
  \draw[gridline] (1.58,2.00) -- (8.18,2.00);
  \draw[gridline,line width=.90pt] (1.58,.82) -- (8.18,.82);

  \foreach \x in {1.58,3.78,5.98,8.18} {
    \draw[boundary] (\x,1.67) -- (\x,2.33);
  }
  \foreach \x in {2.68,4.88,7.08} {
    \draw[boundary] (\x,.49) -- (\x,1.15);
  }

  % Choose c on a phase-0 boundary.  The half-shifted phase then has a
  % boundary-free L/4 neighborhood around c.
  \fill[black!10,rounded corners=.8pt] (3.23,.45) rectangle (4.33,1.19);
  \draw[centerline] (3.78,.28) -- (3.78,2.45);
  \fill[black!88] (3.78,2.00) circle (1.8pt);
  \fill[black!88] (3.78,.82) circle (1.8pt);
  \node[above=2pt,font=\sffamily\footnotesize\bfseries] at (3.78,2.42) {$c$};
  \node[below=2pt,text=black!76,font=\sffamily\scriptsize]
    at (3.78,.38) {$[c-L/4,\,c+L/4]$};
\end{tikzpicture}

%% file: fig_random_grid_kernel.tex
\begin{tikzpicture}[
  x=1cm,
  y=1cm,
  font=\sffamily\footnotesize,
  gridline/.style={draw=black!60,line width=.65pt},
  boundary/.style={draw=black!54,dashed,line width=.60pt},
  centerline/.style={draw=black!62,densely dotted,line width=.60pt},
  axis/.style={-{Latex[length=1.7mm,width=1.1mm]},draw=black!62,
    line width=.65pt},
  note/.style={font=\sffamily\scriptsize,text=black!72}
]
  % (a) Decoder-side signed grid.
  \node[anchor=west,font=\sffamily\footnotesize\bfseries,text=black!82]
    at (.22,5.72) {(a)};

  \foreach \x/\lab/\shade in {
      .38/Q-1/black!6,
      2.28/Q/black!14,
      4.18/Q+1/black!6,
      6.08/Q+2/black!3
    } {
      \draw[gridline,fill=\shade] (\x,4.20) rectangle +(1.90,.88);
      \node[text=black!74] at (\x+.95,4.63) {$\xi_{\lab}$};
    }

  % The central half of the Q-cell is exactly the gate A=1.
  \fill[black!35,opacity=.30] (2.755,4.20) rectangle (3.705,5.08);
  \draw[boundary] (2.755,4.02) -- (2.755,5.23);
  \draw[boundary] (3.705,4.02) -- (3.705,5.23);
  \draw[decorate,decoration={brace,amplitude=3.2pt,mirror},draw=black!64]
    (2.755,4.10) -- node[note,below=3.5pt] {$A=1$} (3.705,4.10);

  \draw[centerline] (3.22,4.92) -- (3.22,5.45);
  \fill[black!88] (3.22,5.30) circle (1.65pt);
  \node[above=2pt,font=\sffamily\scriptsize\bfseries] at (3.22,5.30) {$c$};
  \draw[centerline] (5.08,4.92) -- (5.08,5.45);
  \fill[white,draw=black!82,line width=.65pt] (5.08,5.30) circle (1.75pt);
  \node[above=2pt,font=\sffamily\scriptsize\bfseries,text=black!80]
    at (5.08,5.30) {$X$};
  \draw[-{Latex[length=1.7mm]},draw=black!70,line width=.75pt]
    (3.32,5.58) -- node[above=1pt,note] {$d=X-c$} (4.98,5.58);

  \node[note] at (1.33,3.47) {$-1$};
  \node[note] at (3.23,3.47) {$0$};
  \node[note] at (5.13,3.47) {$+1$};
  \node[note] at (7.03,3.47) {$0$};
  \node[note,anchor=west] at (.72,3.15)
    {$\E_\xi[(\widetilde B-\xi_Q)(\xi_{Q+1}-\xi_{Q-1})]$};

  % (b) The compact trapezoidal kernel psi_{1/4}.
  \node[anchor=west,font=\sffamily\footnotesize\bfseries,text=black!82]
    at (.22,2.64) {(b)};
  \coordinate (org) at (.68,.52);
  \draw[axis] (org) -- (7.72,.52)
    node[below=1pt,note] {$t=|d|/r$};
  \draw[axis] (org) -- (.68,2.28);
  \node[note,anchor=west] at (.84,2.16) {$\psi_{1/4}(t)$};

  \fill[black!10]
    (1.36,.52) -- (2.72,1.65) -- (4.08,1.65) -- (5.44,.52) -- cycle;
  \draw[draw=black!78,line width=.90pt]
    (.68,.52) -- (1.36,.52) -- (2.72,1.65) --
    (4.08,1.65) -- (5.44,.52) -- (7.38,.52);
  \foreach \x/\lab in {
      1.36/$\frac14$,
      2.72/$\frac34$,
      4.08/$\frac54$,
      5.44/$\frac74$
    } {
      \draw[boundary] (\x,.46) -- (\x,1.79);
      \draw[black!55] (\x,.46) -- (\x,.58);
      \node[below=3pt,font=\sffamily\scriptsize] at (\x,.52) {\lab};
    }
  \draw[black!55] (.62,1.65) -- (.74,1.65);
  \node[left=2pt,font=\sffamily\scriptsize] at (.68,1.65) {$\frac12$};
  \node[note,anchor=west] at (3.08,2.16)
    {$C_{1/4}=\int_0^\infty\!\psi_{1/4}(t)t^{-2}dt=\log(15/7)$};
\end{tikzpicture}